\documentclass[11pt,letterpaper]{article}

\usepackage{url}

\usepackage[letterpaper,margin=1.00in]{geometry}
\usepackage{times}
\usepackage{amsmath, amssymb, amsthm}
\usepackage{cite}
\usepackage{appendix}
\usepackage{graphicx}
\usepackage{epstopdf}
\usepackage{color}
\usepackage{algorithm}
\usepackage[noend]{algpseudocode}
\usepackage{multirow}

\usepackage{xspace}
\usepackage{mdwlist}
\usepackage[framemethod=tikz]{mdframed}
\usepackage{caption}
\usepackage[colorinlistoftodos]{todonotes}

\definecolor{dgreen}{rgb}{0, 0.5,0}
\usepackage{hyperref}
\hypersetup{
    unicode=false,          
    colorlinks=true,        
    linkcolor=red,          
    citecolor=dgreen,        
    filecolor=magenta,      
    urlcolor=cyan           
}
\usepackage[capitalize]{cleveref}

\algnewcommand\algorithmicswitch{\textbf{switch}}
\algnewcommand\algorithmiccase{\textbf{case}}

\algdef{SE}[SWITCH]{Switch}{EndSwitch}[1]{\algorithmicswitch\ #1\ \algorithmicdo}{\algorithmicend\ \algorithmicswitch}%
\algdef{SE}[CASE]{Case}{EndCase}[1]{\algorithmiccase\ #1}{\algorithmicend\ \algorithmiccase}%
\algtext*{EndSwitch}%
\algtext*{EndCase}%

\newcommand{\eps}{\varepsilon}

\newcommand{\set}[1]{\left\{#1\right\}}
\newcommand{\C}{{\lambda}}

\newcommand{\local}{\ensuremath{\mathsf{LOCAL}}}
\newcommand{\congest}{\ensuremath{\mathsf{CONGEST}}}

\renewcommand{\paragraph}[1]{\vspace{0.15cm}\noindent {\bf #1}}
\newtheorem{theorem}{Theorem}[section]
\newtheorem{lemma}[theorem]{Lemma}
\newtheorem{claim}[theorem]{Claim}

\newtheorem{corollary}[theorem]{Corollary}

\newtheorem{definition}[theorem]{Definition}


\newcommand{\FullOrShort}{full}

\ifthenelse{\equal{\FullOrShort}{full}}{
	
	  \newcommand{\fullOnly}[1]{#1}
	  \newcommand{\shortOnly}[1]{}
		\newcommand{\IncludePictures}[1]{#1}
		\newcommand{\algorithmsize}{\footnotesize}
	}{

  \usepackage{times}
    
		\newcommand{\fullOnly}[1]{}
	  \newcommand{\shortOnly}[1]{#1}
		\newcommand{\IncludePictures}[1]{}
		\newcommand{\algorithmsize}{\footnotesize}
	}


\begin{document}

\title{Distributed Minimum Cut Approximation\footnote{A preliminary
    version of this paper appeared in \cite{disc-version}.}}

\author{
  Mohsen Ghaffari\\
  \small MIT\\
  \small Cambridge, MA, USA\\
  \texttt{\small ghaffari@mit.edu}
  \and
  Fabian Kuhn\\
  \small University of Freiburg\\
  \small Freiburg, Germany\\
  \texttt{\small kuhn@cs.uni-freiburg.de}
}

\date{}




\maketitle

\begin{abstract}
  We study the problem of computing approximate minimum edge cuts by
  distributed algorithms. We use a standard synchronous message passing
  model where in each round, $O(\log n)$ bits can be transmitted over
  each edge (a.k.a.\ the $\congest$ model). We present a distributed algorithm that, for any weighted graph and any $\epsilon \in (0, 1)$, with high probability finds a cut of size at most $O(\epsilon^{-1}\lambda)$ in $O(D) + \tilde{O}(n^{1/2 + \epsilon})$ rounds, where $\lambda$ is the size of the minimum cut. This algorithm is based on a simple approach for analyzing random edge sampling, which we call the \emph{random layering technique}. In addition, we also present another distributed algorithm, which is based on a centralized algorithm due to Matula [SODA '93], that with high probability computes a cut of size at most $(2+\epsilon)\lambda$ in $\tilde{O}((D+\sqrt{n})/\epsilon^5)$ rounds for any $\epsilon>0$.
	
	The time complexities of our algorithms
  almost match the $\tilde{\Omega}(D + \sqrt{n})$ lower bound of Das
  Sarma et al.~[STOC '11], thus leading to an answer to an open
  question raised by Elkin~[SIGACT-News '04] and Das Sarma et
  al.~[STOC '11].
	
  To complement our upper bound results, we also strengthen the
  $\tilde{\Omega}(D + \sqrt{n})$ lower bound of Das Sarma et al.\ by
  extending it to unweighted graphs. We show that the same lower bound
  also holds for unweighted multigraphs (or equivalently for weighted
  graphs in which $O(w\log n)$ bits can be transmitted in each round
  over an edge of weight $w$). These results even hold if the diameter
  is $D=O(\log n)$. For unweighted simple graphs, we show that even
  for networks of diameter $\tilde{O}\big(\frac{1}{\lambda}\cdot
  \sqrt{\frac{n}{\alpha\lambda}}\big)$ finding an $\alpha$-approximate
  minimum cut in networks of edge connectivity $\lambda$ or computing
  an $\alpha$-approximation of the edge connectivity requires time at
  least $\tilde{\Omega}\big(D + \sqrt{\frac{n}{\alpha\lambda}}\big)$.
\end{abstract}

\section{Introduction}
\label{sec:intro}

Finding minimum cuts or approximately minimum cuts are classical and
fundamental algorithmic graph problems with many important
applications. In particular, minimum edge cuts and their size (i.e.,
the edge connectivity) are relevant in the context of networks, where
edge weights might represent link capacities and therefore edge
connectivity can be interpreted as the throughput capacity of the
network. Decomposing a network using small cuts helps designing
efficient communication strategies and finding communication
bottlenecks (see, e.g., \cite{PQ82, KargerStein93}). Both the exact
and approximate variants of the minimum cut problem have received
extensive attention in the domain of centralized algorithms (cf.\
Section \ref{subsec:centralized} for a brief review of the results in
the centralized setting). This line of research has led to (almost)
optimal centralized algorithms with running times $\tilde{O}(m+n)$
\cite{Karger-LinearTimeCut} for the exact version and
$O(m+n)$\cite{Matula93} for constant-factor approximations, where $n$
and $m$ are the numbers of nodes and edges, respectively.

As indicated by Elkin\cite{elkin04} and Das Sarma et
al. \cite{dassarma12}, the problem has remained essentially open in
the distributed setting. In the $\local$ model \cite{peleg00} where in
each round, a message of unbounded size can be sent over each edge,
the problem has a trivial time complexity of $\Theta(D)$ rounds, where
$D$ is the (unweighted) diameter of the network. The problem is
therefore more interesting and also practically more relevant in
models where messages are of some bounded size $B$. The standard model
incorporating this restriction is the $\congest$ model \cite{peleg00},
a synchronous message passing model where in each time unit, $B$ bits
can be sent over every link (in each direction). It is often assumed
that $B=\Theta(\log n)$. The only known non-trivial result is an
elegant lower bound by Das Sarma et al.~\cite{dassarma12} showing
that any $\alpha$-approximation of the minimum cut in weighted graphs
requires at least $\Omega(D + \sqrt{n/(B\log n)})$ rounds.

\paragraph{Our Contribution:}
We present two distributed minimum-cut
approximation algorithms for undirected weighted graphs, with complexities almost matching the lower bound
of \cite{dassarma12}. We also extend the lower
bound of \cite{dassarma12} to unweighted graphs and multigraphs.

Our first algorithm, presented in \Cref{sec:layeringCut}, with high
probability\footnote{We use the phrase \emph{with
    high probability} (w.h.p.) to indicate
  probability greater than $1-\frac{1}{n}$.} finds a cut of size at most
$O(\eps^{-1}\C)$, for any $\epsilon \in (0, 1)$ and where $\C$ is the
edge connectivity, i.e., the size of the minimum cut in the
network. The time complexity of this algorithm is $O(D)+
O(n^{1/2+\epsilon} \log^3 n \log \log n \log^* n)$. The algorithm is
based on a simple and novel approach for analyzing random edge
sampling, a tool that has proven extremely successful also for
studying the minimum cut problem in the centralized setting (see,
e.g., \cite{KargerStein93}). Our analysis is based on \emph{random
  layering}, and we believe that the approach might also be useful for
studying other connectivity-related questions. Assume that each edge
$e\in E$ of an unweighted multigraph $G=(V, E)$ is independently
sampled and added to a subset $E' \subset E$ with probability
$p$. For $p \leq \frac{1}{\C}$, the graph $G'=(V, E')$ induced by the sampled edges is
disconnected with at least a constant probability (just consider one min-cut). In Section
\ref{sec:sampling}, we use random layering to show that
if $p = \Omega(\frac{\log n}{\C})$, the sampled graph $G'$ is
connected w.h.p. This bound is optimal and was known previously, with
two elegant proofs:~\cite{LP71} and~\cite{KargerSTOC94}. Our proof is simple and self-contained and it
serves as a basis for our algorithm in Section \ref{sec:layeringCut}.

The second algorithm, presented in \Cref{sec:Matula}, finds a cut with
size at most $(2+\eps)\C$, for any constant $\eps >0$, in time $O((D+
\sqrt{n}\log^* n) \log^2 n \log \log n \cdot \frac{1}{\eps^5})$. This
algorithm combines the general approach of Matula's centralized
$(2+\eps)$-approximation algorithm \cite{Matula93} with Thurimella's
algorithm for sparse edge-connectivity certificates\cite{Thurimella97}
and with the famous random edge sparsification technique of Karger
(see e.g., \cite{KargerSTOC94}).

To complement our upper bounds, we also extend the lower bound of Das
Sarma et al.~\cite{dassarma12} to unweighted graphs and
multigraphs. When the minimum cut problem (or more generally problems
related to small edge cuts and edge connectivity) are in a distributed
context, often the weights of the edges correspond to their
capacities. It therefore seems reasonable to assume that over a link
of twice the capacity, we can also transmit twice the amount of data
in a single time unit. Consequently, it makes sense to assume that
over an edge of weight (or capacity) $w\geq 1$, $O(w\log n)$ bits can
be transmitted per round (or equivalently that such a link corresponds
to $w$ parallel links of unit capacity). The lower bound of
\cite{dassarma12} critically depends on having links with (very) large
weight over which in each round only $O(\log n)$ bits can be
transmitted. We generalize the approach of \cite{dassarma12} and
obtain the same lower bound result as in \cite{dassarma12} for the
weaker setting where edge weights correspond to edge capacities (i.e.,
the setting that can be modeled using unweighted
multigraphs). Formally, we show that if $Bw$ bits can be transmitted
over every edge of weight $w\geq 1$, for every $\alpha\geq 1$ and
sufficiently large $\lambda$, there are $\lambda$-edge-connected
networks with diameter $O(\log n)$ on which computing an
$\alpha$-approximate minimum cut requires time at least
$\Omega\big(\sqrt{n/(B\log n)}\big)$. Further, for unweighted simple
graphs with edge connectivity $\lambda$, we show that even for
diameter $D=\big(\frac{1}{\lambda}\cdot\sqrt{n/(\alpha\lambda B\log
  n)}\big)$ finding an $\alpha$-approximate minimum cut or
approximating the edge connectivity by a factor of $\alpha$ requires
at least time $\Omega\big(\sqrt{n/(\alpha\lambda B\log n)}\big)$.

In addition, 
our technique yields a structural result about
$\C$-edge-connected graphs with small diameter. We show that for every
$\C> 1$, there are $\C$-edge-connected graphs $G$ with diameter
$O(\log n)$ such that for any partition of the edges of $G$ into
spanning subgraphs, all but $O(\log n)$ of the spanning subgraphs have
diameter $\Omega(n)$ (in the case of unweighted multigraphs)
or $\Omega(n/\C)$ (in the case of unweighted simple graphs). As a
corollary, we also get that when sampling each edge of such a graph
with probability $p\leq \gamma/\log n$ for a sufficiently small
constant $\gamma>0$, with at least a positive constant probability, the subgraph
induced by the sampled edges has diameter $\Omega(n)$ (in the
case of unweighted multigraphs) and $\Omega(n/\C)$ (in the case of
unweighted simple graphs). The details
  about these results are deferred to \fullOnly{\Cref{sec:disseminationlower}}\shortOnly{the full version \cite{Cut-FullVersion}}. 

\subsection{Related Work in the Centralized Setting}\label{subsec:centralized}
Starting in the 1950s \cite{FF56, EFS56}, the
traditional approach to the minimum cut problem was to use max-flow
algorithms (cf.\ \cite{FordFulkerson} and \cite[Section 1.3]{KargerStein93}). In the
1990s, three new approaches were introduced which go away from the
flow-based method and provide faster algorithms: The first method, presented by Gabow\cite{Gabow91}, is
based on a matroid characterization of the min-cut and it
finds a min-cut in $O(m+\C^2 n\log{\frac{n}{m}})$ steps, for any
unweighted (but possibly directed) graph with edge connectivity $\C$.
The second approach applies to (possibly) weighted but undirected
graphs and is based on repeatedly identifying and contracting edges
outside a min-cut until a min-cut becomes apparent (e.g.,
\cite{NI92, Karger93-Contraction, KargerStein93}). The beautiful
\emph{random contraction algorithm} (RCA) of
Karger~\cite{Karger93-Contraction} falls into this category. In the
basic version of RCA, the following procedure is repeated $O(n^2 \log
n)$ times: contract uniform random edges one by one until only two
nodes remain. The edges between these two nodes correspond to a cut in
the original graph, which is a min-cut with probability at least
$1/O({n^2})$. Karger and Stein \cite{KargerStein93} also present a
more efficient implementation of the same basic idea, leading to total
running time of $O(n^2 \log^3 n)$. The third method, which again
applies to (possibly) weighted but undirected graphs, is due to
Karger\cite{KargerSTOC96} and is based on a ``semiduality" between
minimum cuts and maximum spanning tree packings. This third method
leads to the best known centralized minimum-cut
algorithm\cite{Karger-LinearTimeCut} with running time $O(m \log^3
n)$.

For the approximation version of the problem (in undirected graphs),
the main known results are as follows. Matula~\cite{Matula93} presents
an algorithm that finds a $(2+\eps)$-minimum cut for any constant
$\eps>0$ in time $O((m+n)/\eps)$. This algorithm is based on a graph
search procedure called \emph{maximum adjacency
  search}. Based on a modified version of the random contraction
algorithm, Karger\cite{KargerSODA94} presents an algorithm that finds
a $(1+\eps)$-minimum cut in time $O(m+n\log^3 n/\eps^4)$.


\section{Preliminaries}
\label{sec:prelim}

\paragraph{Notations and Definitions:}
We usually work with an undirected weighted graph $G=(V, E, w)$, where
$V$ is a set of $n$ vertices, $E$ is a set of (undirected) edges
$e=\{v,u\}$ for $u,v \in V$, and $w:E \rightarrow \mathbb{R}^{+}$ is a
mapping from edges $E$ to positive real numbers. For each edge
$e\in E$, $w(e)$ denotes the weight of edge $e$. In the special case of unweighted graphs, we simply assume $w(e)=1$ for
each edge $e \in E$.

For a given non-empty proper subset $C \subset V$, we define the cut
$(C, V \setminus C)$ as the set of edges in $E$ with exactly one endpoint
in set $C$. The size of this cut, denoted by $w(C)$ is the sum of the weights of the edges in set
$(C, V \setminus C)$. The edge-connectivity $\C(G)$ of the graph is
defined as the minimum size of $w(C)$ as $C$ ranges over all nonempty
proper subsets of $V$. A cut $(C, V\setminus C)$ is called
\emph{$\alpha$-minimum}, for an $\alpha\geq1$, if $w(C) \leq \alpha
\C(G)$. When clear from the context, we sometimes use $\C$ to
refer to $\lambda(G)$.

\paragraph{Communicaton Model and Problem Statements:}
We use a standard \emph{message passing model} (a.k.a.\ the $\congest$
model\cite{peleg00}), where the execution proceeds in synchronous
rounds and in each round, each node can send a message of size $B$ bits to
each of its neighbors. A typically standard case is $B=\Theta(\log n)$.

For upper bounds, for simplicity we assume that
$B=\Theta(\log n)$\footnote{Note that by choosing $B=b\log n$ for
  some $b\geq 1$, in all our upper bounds, the term that does not
  depend on $D$ could be improved by a factor $\sqrt{b}$.}. For upper bounds, we further assume that $B$ is large enough so that a
constant number of node identifiers and edge weights can be packed
into a single message. For $B=\Theta(\log n)$, this implies that each
edge weight $w(e)$ is at most (and at least) polynomial in
$n$. W.l.o.g., we further assume that edge weights are normalized
and each edge weight is an integer in range $\{1, \dots,
n^{\Theta(1)}\}$. Thus, we can also view a weighted graph as a
multi-graph in which all edges have unit weight and multiplicity at
most $n^{\Theta(1)}$ (but still only $O(\log n)$ bits can be
transmitted over all these parallel edges together).

For lower bounds, we assume a weaker model where $B\cdot w(e)$ bits
can be sent in each round over each edge $e$. To ensure that at least
$B$ bits can be transmitted over each edge, we assume that the weights
are scaled such that $w(e)\geq 1$ for all edges. For integer weights,
this is equivalent to assuming that the network graph is an unweighted
multigraph where each edge $e$ corresponds to $w(e)$ parallel
unit-weight edges.

In the problem of computing an
\emph{$\alpha$-approximation of the minimum cut}, the goal is to find
a cut $(C^*, V\setminus C^*)$ that is $\alpha$-minimum. To indicate
this cut in the distributed setting, each node $v$ should know whether $v \in C^*$. In the
problem of \emph{$\alpha$-approximation of the edge-connectivity}, all nodes must output an estimate $\tilde{\C}$ of $\C$ such that
$\tilde{\C}\in [\C, \C\alpha]$. In randomized algorithms for these problems, time complexities are fixed deterministically and the correctness guarantees are required to hold with high probability.


\subsection{Black-Box Algorithms}\label{subsec:Thurimella}

In this paper, we make frequent use of a \emph{connected component
  identification} algorithm due to Thurimella \cite{Thurimella97},
which itself builds on the minimum spanning tree algorithm of Kutten
and Peleg\cite{KuttenPeleg95}. Given a graph $G(V,E)$ and a subgraph
$H=(V, E')$ such that $E' \subseteq E$, Thurimella's algorithm
identifies the connected components of $H$ by assigning a label
$\ell(v)$ to each node $v \in V$ such that two nodes get the same
label iff they are in the same connected component of $H$. The time
complexity of the algorithm is $O(D+\sqrt{n} \log^{*}{n})$ rounds,
where $D$ is the (unweighted) diameter of $G$. Moreover, it is easy to
see that the algorithm can be made to produce labels $\ell(v)$ such
that $\ell(v)$ is equal to the smallest (or the largest) id in the
connected component of $H$ that contains $v$.
Furthermore, the connected component identification algorithm can also
be used to test whether the graph $H$ is connected (assuming that $G$
is connected). $H$ is not connected if and only if there is an edge
$\set{u,v}\in E$ such that $\ell(u)\neq\ell(v)$. If some node $u$
detects that for some neighbor $v$ (in $G$), $\ell(u)\neq\ell(v)$, $u$
broadcasts \emph{not connected}. Connectivity of $H$ can therefore be
tested in $D$ additional rounds. We refer to this as Thurimella's
\emph{connectivity-tester} algorithm. Finally, we remark that the same
algorithms can also be used to solve $k$ independent instances of the
connected component identification problem or $k$ independent
instances of the connectivity-testing problem in $O(D+k\sqrt{n}
\log^{*}{n})$ rounds. This is achieved by pipelining the messages of
the broadcast parts of different instances.

\section{Edge Sampling and The Random Layering Technique} 
\label{sec:sampling}

Here, we study the process of random edge-sampling and present a simple technique, which we call \emph{random
  layering}, for analyzing the connectivity of the graph obtained
through sampling. This technique also forms the basis
of our min-cut approximation algorithm presented in the next section.

\paragraph{Edge Sampling} Consider an arbitrary unweighted multigraph
$G=(V,E)$. Given a probability $p \in [0,1]$, we define an \emph{edge
  sampling experiment} as follows: choose subset $S \subseteq E$ by
including each edge $e \in E$ in set $S$ independently with
probability $p$. We call the graph $G'=(V, S)$ \emph{the sampled
  subgraph}.

We use the \emph{random layering technique} to answer the following \emph{network reliability} question: ``How
large should $p$ be, as a function of minimum-cut size $\C$,
so that the sampled graph is connected w.h.p.?"\footnote{A rephrased version is, how large should the edge-connectivity $\C$ of a network be such
  that it remains connected w.h.p. if each edge fails with probability
  $1-p$.} 
Considering just one cut of size $\C$ we see that if $p \leq \frac{1}{\C}$, then the probability
that the sampled subgraph is connected is at most $\frac{1}{e}$. We
show that $p \geq\frac{20\log n}{\C}$ suffices so that the sampled subgraph is connected w.h.p. Note that this is non-trivial as
a graph has exponential many cuts. It is easy to see that this bound is
asymptotically optimal~\cite{LP71}.

\begin{theorem}
\label{thm:UpperTh} Consider an arbitrary unweighted multigraph $G=(V, E)$ with edge connectivity $\C$ and choose subset $S \subseteq E$ by including each edge $e \in E$ in set $S$ independently with probability $p$. If $p \geq \frac{20\log n}{\C}$, then the sampled subgraph $G'=(V, S)$ is connected with probability at least $1-\frac{1}{n}$.
\end{theorem}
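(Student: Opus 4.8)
The plan is to prove the theorem via the \emph{random layering} argument, which sidesteps the usual route of bounding the number of small cuts (Karger's cut‑counting bound) and then union bounding over them. Since sampling more edges can only help connectivity, I would first reduce to the case $p = \frac{20\log n}{\C}$. Then I would decompose the experiment into $L := \lfloor 20\log n\rfloor$ independent \emph{layers}, where layer $i$ includes each edge of $E$ independently with probability $q := 1/\C$; let $E_i$ be the edges picked in layer $i$. Since $1-(1-q)^L \le qL \le \frac{20\log n}{\C} = p$, the union $E_1\cup\cdots\cup E_L$ is stochastically dominated by $S$, so by monotonicity of connectivity it suffices to show that $H_L := (V, E_1\cup\cdots\cup E_L)$ is connected w.h.p. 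Writing $H_i := (V, E_1\cup\cdots\cup E_i)$ and letting $M_i$ be the number of connected components of $H_i$, we have $H_0 \subseteq \cdots \subseteq H_L$, $M_0 = n$, and the sequence $M_i$ is non-increasing; the goal becomes $P(M_L \ge 2) \le 1/n$.

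The core of the argument, and the step I expect to need the most care, is the one-step contraction estimate
\[
  \E\big[M_i - 1 \,\big|\, H_{i-1}\big] \ \le\ \beta\,(M_{i-1}-1), \qquad \beta := \tfrac{1+1/e}{2} < 1 .
\]
I would prove it as follows. If $H_{i-1}$ is connected the bound is trivial, so condition on an $H_{i-1}$ whose components are $K_1,\dots,K_c$ with $c = M_{i-1}\ge 2$. Each $K_j$ is one side of a genuine cut of $G$, so at least $\C$ edges of $G$ leave $K_j$; call $K_j$ a \emph{survivor} if layer $i$ contains none of those edges. Conditioned on $H_{i-1}$, the layer-$i$ coin flips are fresh and independent with bias $q=1/\C$, so $K_j$ survives with probability at most $(1-\tfrac1\C)^\C \le 1/e$, whence $\E[\,\#\text{survivors}\mid H_{i-1}\,]\le c/e$ by linearity of expectation. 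In $H_i$ each survivor is still its own component, whereas every non-survivor is joined by a layer-$i$ edge to another (necessarily non-survivor) component, so the non-survivors merge into groups of size at least two; hence $M_i \le \#\text{survivors} + \tfrac12(c - \#\text{survivors})$. Taking conditional expectations gives $\E[M_i\mid H_{i-1}] \le \tfrac12(c + c/e) = \beta c$, and since $\beta<1$ this rearranges to the displayed inequality.

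Finally I would iterate via the tower rule to get $\E[M_L - 1]\le \beta^L(n-1) < \beta^{20\log n}\,n$, which, because $\beta$ is a fixed constant $<1$, is far below $1/n$ (the constant $20$ is generous). Markov's inequality then yields $P(H_L\text{ disconnected}) = P(M_L - 1 \ge 1)\le \E[M_L - 1]\le 1/n$, so $G'$ is connected with probability at least $1-1/n$. Two points deserve attention. First, the one-step estimate uses only that \emph{every} component of $H_{i-1}$ — however large — still has at least $\C$ outgoing edges, together with the fact that there are at most $n$ components to sum over; this is precisely what replaces the union bound over the exponentially many cuts and keeps the proof self-contained. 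Second, one must check the layer decomposition is consistent, i.e., that $L=\Theta(\log n)$ layers of per-edge probability $1/\C$ fit inside a single $p$-sampling with $p = \Theta(\log n/\C)$, which is the inequality $qL \le p$ used at the outset.
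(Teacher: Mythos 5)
Your proof is correct and uses the same random-layering idea as the paper, but the two executions differ in two interesting ways. First, the paper assigns each edge a single uniformly random layer in $\{1,\dots,L\}$ and then needs a ``principle of deferred decisions'' argument to see that, conditioned on the contents of layers $1,\dots,i$, each remaining cut edge still lands in layer $i{+}1$ with probability at least $p/L$; you instead draw $L$ genuinely independent Bernoulli$(1/\C)$ samples and couple their union into the $p$-sample via the elementary bound $1-(1-1/\C)^L\le L/\C\le p$, which buys you true layer independence with no conditioning subtleties. Second, and more substantively, the paper proves the per-layer contraction only in probability — it shows $\Pr[M_{i+1}\le 0.87\,M_i \text{ or } M_i=1]\ge 1/2$ via Markov on the number of bad components — and then needs a Chernoff bound over the $L$ indicator variables $X_i$ to conclude; you instead prove the cleaner martingale-type one-step estimate $\E[M_i-1\mid H_{i-1}]\le \beta\,(M_{i-1}-1)$ with $\beta=\tfrac{1+1/e}{2}$, iterate by the tower rule, and finish with a single Markov inequality. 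Your route is slightly shorter and avoids the Chernoff step entirely; the subtraction of $1$ and the check that $\beta c - 1 \le \beta(c-1)$ (equivalently $\beta\le 1$) is exactly the right bookkeeping to make the recursion close, and the constant $20$ is indeed ample for $\beta^{L}(n-1)\le 1/n$. The key geometric observation — that a surviving component has \emph{no} incident layer-$i$ edges, so non-survivors can only merge with non-survivors, giving $M_i\le \#\mathrm{surv}+\tfrac12(c-\#\mathrm{surv})$ — matches the paper's argument, and your explicit remark that every component of $H_{i-1}$ still has at least $\C$ boundary edges is precisely what replaces the cut-counting union bound. One tiny point worth stating explicitly if you write this up: the stochastic domination requires a coupling of the independent per-edge coins, which exists here because the per-edge inclusion probability under the layered process is pointwise at most $p$ and the edges are independent in both processes.
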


We remark that this result was known prior to this paper, via two
different proofs by Lomonosov and Polesskii \cite{LP71} and
Karger\cite{KargerSTOC94}. The Lomonosov-Polesskii proof \cite{LP71} uses
an interesting coupling argument and shows that among the graphs of a
given edge-connectivity $\C$, a cycle of length $n$ with edges of
multiplicity $\C/2$ has the smallest probability of remaining
connected under random sampling. Karger's proof\cite{KargerSTOC94} uses
the powerful fact that the number of $\alpha$-minimum cuts is at most
$O(n^{2\alpha})$ and then uses basic probability concentration
arguments (Chernoff and union bounds) to show that, w.h.p., each cut
has at least one sampled edge. There are many known proofs for the
$O(n^{2\alpha})$ upper bound on the number of $\alpha$-minimum cuts
(see \cite{Karger-LinearTimeCut}); an elegant argument follows from
Karger's \emph{random contraction
  algorithm}\cite{Karger93-Contraction}.

Our proof of \Cref{thm:UpperTh} is simple and
self-contained, and it is the only one of the three approaches that
extends to the case of random vertex failures\footnote{There, the question is, how large the vertex sampling
  probability $p$ has to be chosen, as a function of vertex connectivity $k$, so that the
  vertex-sampled graph is connected, w.h.p. The extension to the
  vertex version requires important modifications and leads
  to $p = \Omega(\frac{\log n}{\sqrt{k}})$ being a sufficient
  condition. Refer to \cite[Section 3]{CGK13} for details.}
\cite[Theorem 1.5]{CGK13}.

\begin{proof}[Proof of \Cref{thm:UpperTh}]
  Let $L = 20\log n$. For each edge $e\in E$, we independently choose
  a uniform random \emph{layer number} from the set $\{1, 2, \dots,
  L\}$. Intuitively, we add the sampled edges layer by layer and
  show that with the addition of the sampled edges of each layer, the number of
  connected components goes down by at least a constant factor, with
  at least a constant probability, and independently of the previous
  layers. After $L=\Theta(\log n)$ layers, connectivity is achieved w.h.p.

  We start by presenting some notations. For each $i \in \{1, \dots, L\}$, let $S_i$ be the set
  of sampled edges with layer number $i$ and let $S_{i-} =
  \bigcup_{j=1}^{i} S_j$, i.e., the set of all sampled edges in layers $\{1, \dots, i\}$. Let $G_i = (V, S_{i-})$ and let $M_i$ be the
  number of connected components of graph $G_i$. We show that $M_{L}=1$, w.h.p. 
	
	For any $i \in [1, L-1]$, since $S_{i-} \subseteq S_{(i+1)-}$, we have $M_{i+1} \leq
  M_{i}$. Consider the indicator variable ${X}_i$ such that ${X}_i =1$
  iff $M_{i+1} \leq 0.87 M_{i}$ or $M_i=1$. We show the following
  claim, after which, applying a Chernoff bound completes the proof.
  \begin{claim}\label{clm}
    For all $i \in [1, L-1]$ and $T \subseteq E$, we have  $\Pr[X_i=1| S_{i-}=T] \geq
    1/2$.
  \end{claim}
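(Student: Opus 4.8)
The plan is to condition on the event $S_{i-}=T$, which completely determines the graph $G_i=(V,T)$ and hence its connected components $C_1,\dots,C_k$ with $k=M_i$; if $k=1$ then $X_i=1$ holds deterministically, so assume $k\ge 2$. The key structural observation is that any edge of $G$ with endpoints in two distinct components of $G_i$ must be absent from $T$ (otherwise its endpoints would already lie in a common component of $(V,T)$); call these the \emph{crossing} edges. It then remains to understand, conditioned on $S_{i-}=T$, how the layer-$(i+1)$ sampled edges $S_{i+1}$ merge the $C_j$'s, and for this only the crossing edges are relevant.

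First I would make the conditioning precise. One can realize the whole experiment by assigning each edge $e$ an independent label $\ell(e)\in\{0,1,\dots,L\}$ with $\Pr[\ell(e)=0]=1-p$ (meaning $e$ is not sampled) and $\Pr[\ell(e)=j]=p/L$ for each $j\in\{1,\dots,L\}$ (meaning $e$ is sampled into layer $j$); then $S_{i-}=\{e:\ell(e)\in[1,i]\}$ and $S_{i+1}=\{e:\ell(e)=i+1\}$. The event $\{S_{i-}=T\}$ is a product event over edges, so conditioned on it the labels stay independent across edges, and for a crossing edge $e$ (which has $e\notin T$) one computes $\Pr[e\in S_{i+1}\mid S_{i-}=T]=\frac{p/L}{1-ip/L}\ge\frac{p}{L}\ge\frac{1}{\C}$, using $L=20\log n$ and $p\ge\frac{20\log n}{\C}$. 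Hence, conditionally on $S_{i-}=T$, every crossing edge lands in $S_{i+1}$ independently with probability at least $1/\C$.

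The next step is a contraction-and-counting argument. Contract each $C_j$ to a single vertex to obtain an auxiliary multigraph $\mathcal H$ on vertex set $\{C_1,\dots,C_k\}$ whose edges are the images of the edges in $S_{i+1}$; then $M_{i+1}$ equals the number of connected components of $\mathcal H$. Call $C_j$ \emph{isolated} if no edge of $S_{i+1}$ is incident to it, and let $Z$ count the isolated components. Since $G$ is $\C$-edge-connected, the cut $(C_j,V\setminus C_j)$ has at least $\C$ edges, all of them crossing, so $\Pr[C_j\text{ isolated}\mid S_{i-}=T]\le(1-1/\C)^{\C}\le 1/e$, and by linearity of expectation $\E[Z\mid S_{i-}=T]\le k/e$. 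On the other hand, the $k-Z$ non-isolated vertices of $\mathcal H$ partition into connected pieces of size at least $2$, so $M_{i+1}\le Z+\frac{k-Z}{2}=\frac{k+Z}{2}$. Consequently $M_{i+1}\le 0.87\,M_i$ is implied by $Z\le 0.74\,k$, and Markov's inequality gives $\Pr[Z>0.74\,k\mid S_{i-}=T]\le\frac{k/e}{0.74\,k}=\frac{1}{0.74\,e}<\frac12$, whence $\Pr[X_i=1\mid S_{i-}=T]\ge\frac12$.

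The step I expect to require the most care is the first one: checking that conditioning on $S_{i-}=T$ leaves the labels of the still-uncommitted edges independent with the claimed (boosted) marginal probability, and — equally important — observing that every edge crossing between components of $G_i$ is one of these uncommitted edges, which is exactly what lets us invoke the $\C$-edge-connectivity of $G$ separately for each component $C_j$. The remaining pieces, namely the bound $M_{i+1}\le(k+Z)/2$ and the Markov estimate, are routine; the constant $0.87$ is chosen precisely so that $1/(0.74\,e)<1/2$.
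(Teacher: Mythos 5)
Your proof is correct and follows essentially the same route as the paper's: both condition on $S_{i-}=T$, show that each remaining cut edge lands in $S_{i+1}$ independently with (conditional) probability at least $p/L \ge 1/\lambda$, bound the probability that a component fails to merge by $1/e$, and close with linearity of expectation plus Markov and the component-count inequality $M_{i+1}\le (M_i+Z)/2$; your explicit labeling $\ell(e)\in\{0,\dots,L\}$ is a clean way of packaging what the paper phrases via the principle of deferred decisions. One small wording fix: "isolated" should mean "no \emph{crossing} edge of $S_{i+1}$ is incident to $C_j$" (equivalently, $C_j$ has no non-loop edge in $\mathcal H$), since a component touched only by an internal $S_{i+1}$ edge is a self-loop in $\mathcal H$ and does not merge, which would break the bound $M_{i+1}\le Z+\tfrac{k-Z}{2}$ under the literal reading — though your remark that only crossing edges matter makes the intended reading clear.
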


	To prove this claim, we use \emph{the principle of deferred
  decisions}\cite{StableMarriage} to view the two random processes
  of sampling edges and layering them. More
  specifically, we consider the following process: first, each edge is
  sampled and given layer number $1$ with probability $p/L$. Then,
  each remaining edge is sampled and given layer number $2$ with
  probability $\frac{p/L}{1-p/L} \geq p/L$. Similarly, after
  determining the sampled edges of layers $1$ to $i$, each remaining
  edge is sampled and given layer number $i+1$ with probability
  $\frac{p/L}{1-(i\,p)/L} \geq p/L$. After doing
  this for $L$ layers, any remaining edge is considered not sampled
  and it receives a random layer number from $\{1, 2, \dots, L\}$. It
  is easy to see that in this process, each edge is independently
  sampled with probability exactly $p$ and each edge $e$ gets a
  uniform random layer number from $\{1, 2, \dots, L\}$, chosen
  independently of the other edges and also independently of whether
  $e$ is sampled or not.
	  
	{
  \begin{figure}[t]
    \centering
    \includegraphics[width=0.45\textwidth]{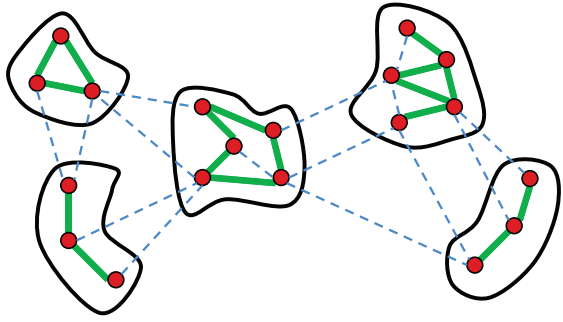}
    {\caption{\small Graph $G_i$ and its connected components. The green solid
        links represent edges in $S_{i-}$ and the blue dashed links represent $E
        \setminus S_{i-}$.}}
    \label{fig:Fig1}
		\vspace{-0.3cm}
  \end{figure}
	}

  Fix a layer $i \in [1, \dots, L-1]$ and a subset $T \subseteq
  E$. Let $S_{i-}=T$ and consider graph $G_i=(V,
  S_{i-})$. \Cref{fig:Fig1} presents an example
  graph $G_i$ and its connected components.
  If $M_i=1$ meaning that $G_i$ is connected, then
  $X_{i}=1$. Otherwise, suppose that $M_i \geq 2$. For each component
  $\mathcal{C}$ of $G_i$, call the component \emph{bad} if
  $(\mathcal{C}, V\setminus \mathcal{C}) \cap S_{i+1} =
  \emptyset$. That is, $\mathcal{C}$ is bad if after adding the
  sampled edges of layer $i+1$, $\mathcal{C}$ does not get connected
  to any other component. We show that $\Pr[\mathcal{C}\textit{ is
    bad}] \leq \frac{1}{e}$. 
		
		Since $G$ is $\C$-edge connected, we have
  $w(C) \geq \C$. Moreover, none of the edges in $(\mathcal{C},
  V\setminus \mathcal{C})$ is in $S_{i-}$. Thus, using the principle
  of deferred decisions as described, each of the edges of the cut
  $(\mathcal{C}, V\setminus \mathcal{C})$ has probability
  $\frac{p/L}{1-(i\, p)/L} \geq p/L$ to be sampled and given layer
  number $i+1$, i.e., to be in $S_{i+1}$. Since $p
  \geq \frac{20\log n}{\C}$, the probability that none of the edges
  $(\mathcal{C}, V\setminus \mathcal{C})$ is in set $S_{i+1}$ is at
  most $(1-p/L)^{\C} \leq 1/e$. Thus, $\Pr[\mathcal{C}\textit{ is
    bad}] \leq 1/e$.
	\shortOnly{Having this, since each component that is not
  bad gets connected to at least one other component (when we look at
  graph $G_{i+1}$), a simple application of Markov's inequality proves the claim, and after that, a Chernoff bound completes the proof. See \cite{Cut-FullVersion} for details.}
	\fullOnly{
	
	Now let $Z_i$ be the number of bad components of $G_i$. Since each
  component is bad with probability at most $1/e$, we have
  $\mathbb{E}[Z_i]\leq M_i/e$. Using Markov's inequality, we get $\Pr[Z_i \geq 2M_i/e] \leq 1/2$. Since each component that is not
  bad gets connected to at least one other component (when we look at
  graph $G_{i+1}$), we have $M_{i+1} \leq Z_i + \frac{(M_i - Z_i)}{2}
  = \frac{M_i + Z_i}{2}$. Therefore, with probability at least $1/2$,
  we have $M_{i+1} \leq \frac{1+2/e}{2} M_{i} < 0.87 M_{i}$. This means that
  $\Pr[X_i=1] \geq 1/2$, which concludes the proof of the claim.

  Now using the claim, we get that $\mathbb{E}[\sum_{i=1}^{L-1} X_i]
  \geq 10\log n$. A Chernoff bound then shows that
  $\Pr[\sum_{i=1}^{L-1} X_i \geq 5\log n] \geq 1-\frac{1}{n}$. This
  means that w.h.p, $M_L \leq \frac{n}{2^{\log n}}
  =1$. That is, w.h.p, $G_L=(V,S)=(V,S_{L-})=G'$ is
  connected.}
	\end{proof}

\Cref{thm:UpperTh} provides a very simple approach for finding an
$O(\log n)$-approximation of the edge connectivity of a network graph
$G$ in $O(D + \sqrt{n} \log^2 n\log^{*}n)$ rounds, simply by trying exponentially growing sampling probabilities and checking the connectivity.  The proof appears
\fullOnly{in Appendix \ref{app:sampling}}\shortOnly{the full version \cite{Cut-FullVersion}}. We note that a similar basic approach has been used to approximate \emph{the size} of min-cut in the streaming model~\cite{Ahn2012}.

\begin{corollary}\label{crl:approx} There exists a distributed
  algorithm that for any unweighted multi-graph $G=(V,E)$, in $O(D + \sqrt{n} \log^2 n \log^{*}n)$ rounds,
  finds an approximation $\tilde{\C}$ of the edge connectivity such that $\tilde{\C}\in [\C, \C \cdot \Theta(\log n)]$ with high probability.
	\end{corollary}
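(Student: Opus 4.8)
The plan is to use \Cref{thm:UpperTh} together with Thurimella's connectivity-tester as a black box, and to search for the right sampling probability by doubling. First I would observe the two one-sided facts we have: (i) if $p \ge \frac{20\log n}{\C}$ then the sampled subgraph $G'=(V,S)$ is connected w.h.p.\ (this is exactly \Cref{thm:UpperTh}), and (ii) if $p \le \frac{1}{\C}$ then, by considering a single fixed min-cut of size $\C$, the probability that no edge of that cut is sampled is $(1-p)^\C \ge (1-1/\C)^\C \ge$ some constant (at least, say, $1/4$ for $\C \ge 2$; the tiny case $\C=1$ can be handled separately since then $\tilde\C=1$ trivially works after one connectivity test on $G$ itself), so $G'$ is disconnected with constant probability. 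These two facts sandwich $\C$: disconnectedness is "likely" below $1/p$ and connectedness is "certain" above $20\log n/p$.

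Next I would run, for $i = 0, 1, 2, \dots, O(\log n)$, the sampling experiment with probability $p_i = \min\{1, 2^i/n\}$ — wait, more precisely $p_i = 2^{-i}$ descending from $1$, or equivalently parametrize by a guessed connectivity value $g_i = 2^i$ and set $p_i = \frac{20\log n}{g_i}$. For each $i$ we sample $S_i$ and use Thurimella's connectivity-tester on $H=(V,S_i)$, which costs $O(D+\sqrt n\log^* n)$ rounds per instance; since there are only $O(\log n)$ values of $i$ and the broadcast parts can be pipelined, the total is $O(D+\sqrt n\log n\log^* n)$ rounds — actually with $O(\log n)$ independent repetitions at each scale to drive the failure probability down (needed because fact (ii) only gives a constant, not high, probability of disconnectedness), it is $O(D+\sqrt n\log^2 n\log^* n)$, matching the claimed bound. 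The output $\tilde\C$ is $2^{i^*}$ where $i^*$ is, say, the largest $i$ for which the sampled graph at probability $\frac{20\log n}{2^i}$ was observed disconnected in at least one of the repetitions (equivalently the smallest guessed connectivity at which it became reliably connected), up to the right constant.

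The correctness argument is then a union bound over the $O(\log n)$ scales. For the scale where $2^i \le \C$ we have $p_i = \frac{20\log n}{2^i} \ge \frac{20\log n}{\C} \ge \frac{1}{\C}$... hmm, I want the opposite direction there; let me instead phrase it as: whenever the \emph{guess} $g_i \ge \C$, i.e.\ $p_i \le \frac{20\log n}{\C}$ is not guaranteed — the clean split is: if $g_i \ge 20\C\log n$ then $p_i = \frac{20\log n}{g_i} \le \frac{1}{\C}$, so by fact (ii) disconnectedness is seen w.h.p.\ after $\Theta(\log n)$ repetitions; and if $g_i \le \C$ then $p_i \ge \frac{20\log n}{\C}$, so by \Cref{thm:UpperTh} the graph is connected w.h.p. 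Hence w.h.p.\ the threshold behavior of the observations is monotone and the first guess $g_i$ at which we reliably see connectivity lies in $[\C, 20\C\log n]$, giving $\tilde\C \in [\C, \Theta(\log n)\cdot\C]$ after adjusting constants. The main obstacle — and the only subtle point — is that fact (ii) gives only a constant probability of the "disconnected" event, so the observations at intermediate scales $\C < g_i < 20\C\log n$ are genuinely unpredictable; this is handled precisely by the $\Theta(\log n)$-fold repetition (so that \emph{some} repetition witnesses disconnectedness w.h.p.\ whenever $p_i\le 1/\C$) plus taking $\tilde\C$ to be the last scale at which \emph{any} repetition reported disconnected, which is robust to this one-sided uncertainty.
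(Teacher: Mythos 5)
Your overall plan matches the paper's proof almost exactly: a doubling search over sampling probabilities, $\Theta(\log n)$ independent repetitions at each scale to amplify the merely-constant disconnection probability from the ``single fixed min-cut'' argument, Thurimella's connectivity-tester as the black box, and pipelining all $\Theta(\log^2 n)$ instances to get $O(D+\sqrt{n}\log^2 n\log^* n)$ rounds. (The paper states its per-scale criterion as ``at least $9/10$ of the $\Theta(\log n)$ trials are connected'' rather than your ``at least one trial is disconnected''; with $\Theta(\log n)$ repetitions both thresholds work, so this is cosmetic.)

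The one genuine error is a persistent sign slip in the rule you state for extracting $\tilde{\C}$. With $g_i=2^i$ and $p_i=\frac{20\log n}{g_i}$, the probability $p_i$ is \emph{decreasing} in $i$; hence for \emph{every} $i$ with $g_i\geq 20\C\log n$ (so $p_i\leq 1/\C$), at least one of the $\Theta(\log n)$ repetitions is disconnected w.h.p. Therefore ``the largest $i$ for which disconnectedness was observed in at least one repetition'' is not the threshold near $\C$ --- it is simply the largest $i$ you run at all, which would return $\tilde{\C}=2^{i_{\max}}=n^{\Theta(1)}$. The rule your own sandwich supports is the opposite endpoint: take $i^*$ to be the \emph{smallest} $i$ at which some repetition is disconnected (equivalently, one past the largest $i$ at which every repetition was connected). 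Then your two w.h.p.\ facts --- all trials connected when $g_i\leq\C$, some trial disconnected when $g_i\geq 20\C\log n$ --- place $g_{i^*}$ in $[\C,\,O(\C\log n)]$ by a union bound over the $O(\log n)$ scales, with no appeal to monotonicity. (Your ``threshold behavior is monotone'' claim is in fact not justified --- the middle range $\C<g_i<20\C\log n$ can behave arbitrarily --- but fortunately it is also not needed, since the two endpoint guarantees alone pin down $i^*$.)
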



\section{Min-Cut Approximation by Random Layering}\label{sec:layeringCut}
Now we use random layering to design a min-cut approximation algorithm. We present the outline of the algorithm and its key parts in Subsections \ref{subsec:algoutline} and \ref{subsec:CutTester}. Then, in Subsection \ref{subsec:tie}, we explain how to put these parts together to prove the following theorem:

\begin{theorem} \label{thm:MinCutLayering}
  There is a distributed algorithm that, for any $\epsilon \in (0, 1)$, finds an $O(\epsilon^{-1})$-minimum cut in $O(D)+ O(n^{0.5+\epsilon} \log^3 n \log\log n \log^{*} n)$ rounds, w.h.p.
\end{theorem}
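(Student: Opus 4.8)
The plan is to use the random layering technique of \Cref{thm:UpperTh} as an analytic tool, but now applied to a graph sampled with probability $p$ chosen so that the sampled subgraph is \emph{just barely} disconnected, i.e.\ so that the min-cut is the dominant obstruction to connectivity. Concretely, if each edge is sampled with probability $p \approx \frac{c\log n}{\lambda}$ for an appropriate constant, then by (a refinement of) the layering argument, any cut of size much larger than $\epsilon^{-1}\lambda$ will with high probability have a sampled edge, while the actual min-cut(s) of size $\lambda$ will survive (have no sampled edge) with at least a noticeable probability. Thus the connected components of the sampled subgraph, or the edges joining them, will reveal a cut of size $O(\epsilon^{-1}\lambda)$. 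The algorithm first obtains an $O(\log n)$-approximation $\tilde\lambda$ of $\lambda$ via \Cref{crl:approx}, so that the right sampling probability can be guessed up to a logarithmic factor by trying $O(\log n)$ geometrically spaced values; for each guessed $p$ we run an edge-sampling experiment, use Thurimella's connected-component algorithm to identify the components of the sampled subgraph in $O(D+\sqrt n\log^* n)$ rounds, and then test the resulting cuts.

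The key subroutine — developed in \Cref{subsec:CutTester} — is a \emph{cut tester}: given a partition of $V$ into the components $\mathcal C_1,\dots,\mathcal C_M$ of the sampled subgraph $G'$, we must find, among cuts of the form $(\mathcal C_j, V\setminus \mathcal C_j)$ (or small unions of components), one whose size is $O(\epsilon^{-1}\lambda)$, and certify its size. Counting the edges crossing a single component $\mathcal C_j$ is easy — each node sums over its incident edges that leave its component, and the component aggregates this along a BFS/spanning tree of $G'$ — but the subtlety is that a single component need not give a small cut; one may need to group components. The idea is that the layering analysis shows more: after a few layers most components have merged, and the ``small'' side of a good cut corresponds to a set of components that is connected in an auxiliary graph. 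We handle this by running the sampling/layering with $O(1/\epsilon)$ layers and arguing that the union of components not yet merged to the ``giant'' side forms a candidate cut of size $O(\epsilon^{-1}\lambda)$; identifying and evaluating it is again a connected-components-plus-aggregation task solvable with $O(1/\epsilon)$ invocations of Thurimella's algorithm. Since we repeat this over $O(\log n)$ guesses of $p$ and, to boost the constant success probability of ``the min-cut survives'' to high probability, over $\tilde O(n^{2\epsilon})$ independent trials (recall there can be up to $O(n^2)$ near-minimum cuts, but only $n^{O(\epsilon)}$ trials are needed to hit a fixed one with the required probability when each trial succeeds with probability $n^{-O(\epsilon)}$), the total round complexity is $O(D) + \tilde O(n^{0.5+\epsilon})$, with the $\log$ factors $\log^3 n\log\log n\log^* n$ coming from: $\sqrt n\log^* n$ per Thurimella call, $O(\log n)$ for pipelining and guesses, $O(\log n)$ for high-probability amplification inside each layering step, and $O(\log n)$ more for the number of layers / trials bookkeeping.

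The main obstacle, which I expect to occupy \Cref{subsec:CutTester}, is precisely this gap between ``the sampled graph is disconnected'' and ``we can read off a cut of size $O(\epsilon^{-1}\lambda)$, and know it has that size.'' Disconnectedness only guarantees \emph{some} cut of size $\le \lambda$ (in fact $0$ sampled edges, but that does not bound its original size!) — wait, that is the crux: a component boundary in $G'$ has no \emph{sampled} edges, but could have arbitrarily many original edges. So the real argument must be run in the contrapositive direction: show that with the chosen $p$, \emph{every} cut of original size $\ge \beta\epsilon^{-1}\lambda$ is crossed by a sampled edge w.h.p.\ (this is the union-bound-over-$n^{O(1/\epsilon)}$-cuts part, using the $O(n^{2\alpha})$ bound on $\alpha$-minimum cuts), so that \emph{every} component boundary of $G'$ — having no sampled edge — automatically has original size $< \beta\epsilon^{-1}\lambda$; then any component boundary is a valid output, and we just need one nontrivial component, i.e.\ $G'$ disconnected, which happens with constant probability when $p\lesssim 1/\lambda$ by considering the true min-cut. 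Getting the two thresholds ($p$ small enough that the min-cut survives with decent probability, yet the union bound over large cuts still closes) to be compatible is exactly where the $\epsilon$ in the exponent and the factor $O(\epsilon^{-1})$ in the cut size come from, and making that quantitative — together with the boosting and the distributed implementation of counting a component's original boundary size in $\congest$ — is the technical heart of the section.
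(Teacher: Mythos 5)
There is a genuine gap in both the analytic core and the distributed subroutine.

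\textbf{The one-shot union bound does not close.} In your refined argument you want to choose a single sampling probability $p$ such that (i) the true minimum cut survives (no sampled edge) with probability $n^{-O(\epsilon)}$ and (ii) w.h.p.\ \emph{every} cut of size $\geq\alpha\lambda$ (with $\alpha=\Theta(\epsilon^{-1})$) is crossed by a sampled edge, the latter via Karger's $O(n^{2\alpha})$ bound on $\alpha$-minimum cuts. Condition (i) forces $p\lesssim \frac{\epsilon\log n}{\lambda}$. Under that $p$, a cut of size $j\lambda$ survives with probability about $n^{-j\epsilon/2}$, while the number of cuts of size at most $j\lambda$ is $n^{2j}$; the union-bound sum over $j\geq\alpha$ of $n^{2j}\cdot n^{-j\epsilon/2}$ diverges for any $\epsilon<4$. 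No choice of $\alpha$ repairs this, since the exponent per ``step'' in $j$ is $2-\epsilon/2>0$. This is exactly the gap that the \emph{random layering} technique closes in a different way. The paper does not try to hit all large cuts at once; instead it assigns each sampled edge to one of $L=\Theta(\log n)$ layers, accumulates the cut induced by every connected component at every layer into a family $\mathcal{F}$ of $O(n\log n)$ candidate cuts, and proves (Theorem~\ref{thm:main}) that \emph{if $\mathcal F$ never contains an $\alpha$-minimum cut then every component boundary was large at every layer, hence by the deferred-decision argument of Theorem~\ref{thm:UpperTh} the number of components drops by a constant factor each layer, hence $G_L$ is connected w.h.p.} Since the sampled graph is disconnected with probability $\geq n^{-\epsilon}$ (just look at the true min cut), a simple probability subtraction gives $\Pr[\mathcal F\text{ contains an }\alpha\text{-min cut}]\geq n^{-\epsilon}/2$ with no cut-counting bound at all. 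Repetition $\Theta(n^\epsilon\log n)$ times (and trying $\Theta(\log\log n)$ guesses of a $2$-approximation of $\lambda$) then succeeds w.h.p.

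\textbf{The cut tester cannot aggregate inside a component.} You propose to count the size of a component boundary by ``aggregating along a BFS/spanning tree of $G'$.'' This fails in $\congest$: a connected component $\mathcal C$ of the sampled subgraph may have diameter $\Omega(n)$ even when $D=O(\log n)$, so this aggregation is far too slow. The paper's cut tester (Lemma~\ref{lem:CutTester}) sidesteps this entirely: it only needs to distinguish $w(\mathcal C)\leq \kappa/(1+\delta)$ from $w(\mathcal C)\geq\kappa(1+\delta)$, and it does so by $\Theta(\log n/\delta^2)$ independent sub-sampling experiments, each checking via Thurimella's component-identification on the \emph{full} graph $G$ (which runs in $O(D+\sqrt n\log^* n)$) whether any edge leaving the component got sampled. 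This threshold test, pipelined over all guesses, epochs, and layers, is what gives the final $O(D)+O(n^{1/2+\epsilon}\log^3 n\log\log n\log^* n)$ bound; the direct-counting approach would not.
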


\subsection{Algorithm Outline}\label{subsec:algoutline}
The algorithm is based on closely studying the sampled graph when the edge-sampling probability is
between the two extremes of $\frac{1}{\C}$ and $\frac{\Theta(\log n)}{\C}$. Throughout this process, we identify a set
$\mathcal{F}$ of $O(n \log n)$ cuts such that, with at least a
`reasonably large probability', $\mathcal{F}$ contains at least one
`small' cut.

\smallskip
\begin{mdframed}[hidealllines=true,backgroundcolor=gray!20]
\vspace{-0.2cm}
\paragraph{The Crux of the Algorithm:} Sample edges with probability
$p=\frac{\epsilon \log n}{2\C}$ for a small $\epsilon \in (0, 1)$. Also, assign
each edge to a random layer in $[1, \dots, L]$, where $L = 20\log n$. For each layer $i \in
[1, \dots, L-1]$, let $S_i$ be the set of sampled edges of layer
$i$ and let $S_{i-}=\bigcup_{j=1}^{i} S_j$. For each layer $i \in [1,
\dots, L-1]$, for each component $\mathcal{C}$ of graph $G_i = (V,
S_{i-})$, add the cut $(\mathcal{C}, V\setminus \mathcal{C})$ to the
collection $\mathcal{F}$. \fullOnly{Since in each layer we add at most $n$ new
cuts and there are $L=O(\log n)$ layers, we collect $O(n
\log n)$ cuts in total.} 
\end{mdframed}
\smallskip

We show that with probability at least $n^{-\epsilon}/2$, at least one of the cuts in $\mathcal{F}$
is an $O(\epsilon^{-1})$-minimum cut. Note that thus repeating the
experiment for $\Theta(n^\epsilon \log n)$ times is enough to get that an $O(\epsilon^{-1})$-minimum cut is found w.h.p.

\begin{theorem} \label{thm:main} Consider performing the above
  sampling and layering experiment with edge sampling probability
  $p=\frac{\epsilon\log n}{2\C}$ for $\epsilon \in (0, 1)$ and $L=20\log n$
  layers. Then, $\Pr[\mathcal{F}\textit{ contains an
  }O(\epsilon^{-1})\textit{-minimum cut}]\geq n^{-\epsilon}/2.$
\end{theorem}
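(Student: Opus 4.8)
The plan is to mimic the layering argument from \Cref{thm:UpperTh}, but now track, layer by layer, not just the number of connected components but the \emph{total weight of the cuts} currently "exposed" at the boundaries of those components. Fix one particular minimum cut $(C^*, V\setminus C^*)$ of size $\C$. Sample edges with probability $p = \frac{\epsilon\log n}{2\C}$ and assign layers uniformly in $\{1,\dots,L\}$, $L=20\log n$. As in the proof of \Cref{thm:UpperTh}, use deferred decisions so that layer $i+1$ reveals, for each edge not yet sampled, whether it is sampled-and-placed-in-layer-$(i+1)$ with probability at least $p/L$. The key observation is: for any component $\mathcal{C}$ of $G_i=(V,S_{i-})$, the cut $(\mathcal{C},V\setminus\mathcal{C})$ was already \emph{placed into $\mathcal{F}$}, so if we can show that at some layer $i$ one of the current components $\mathcal{C}$ has $w(\mathcal{C}) = O(\epsilon^{-1}\C)$ with the claimed probability, we are done.

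The main step is a potential/stopping-time argument. Track the component of $G_i$ containing a fixed vertex of $C^*$, or better, track the "$C^*$-side" structure: let $\mathcal{C}_i^*$ be the union of components of $G_i$ that lie entirely inside $C^*$, and similarly on the other side — actually the cleaner route is to run the layering \emph{restricted to the subgraph $G[C^*]$ and to $G[V\setminus C^*]$ separately} and argue that with probability $\geq n^{-\epsilon}$ the cut edges of $(C^*,V\setminus C^*)$ never get sampled at all (this costs a factor $(1-p)^{\C}\geq n^{-\epsilon}$, since $p\C = \frac{\epsilon\log n}{2}$), while simultaneously, conditioned on that event, the sampling inside each side behaves like independent sampling at rate $\geq p/L$ per layer and — here is where I would like to invoke \Cref{thm:UpperTh}-style reasoning — drives the number of components down. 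But $p$ is a factor $\Theta(\epsilon^{-1}/\log n)$ \emph{below} the connectivity threshold, so we do \emph{not} get connectivity; instead we should get that the components stabilize at a controlled size. The honest claim to prove is: with constant probability (over the internal sampling), after $L$ layers some component $\mathcal{C}$ satisfies $w(\mathcal{C}\cap(C^*,V\setminus C^*)\text{-boundary in }G) = O(\epsilon^{-1}\C)$. Combining the $n^{-\epsilon}$ "cut survives" event with this constant-probability internal event gives the bound $n^{-\epsilon}/2$.

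More concretely, here is the step order I would carry out: (1) Fix a min-cut $(C^*,V\setminus C^*)$; define the event $A$ that no edge of this cut is ever sampled, and show $\Pr[A]\geq(1-p)^{\C}\geq e^{-p\C/(1-p)}\geq n^{-\epsilon}$ for small $\epsilon$. (2) Condition on $A$. Now the sampled subgraph decomposes along $(C^*,V\setminus C^*)$; work inside $G[C^*]$ (the other side is symmetric). Using deferred decisions, each not-yet-sampled internal edge enters layer $i+1$ with probability $\geq p/L = \frac{\epsilon}{40}$. (3) Run the component-counting recursion: call a component $\mathcal{C}\subseteq C^*$ \emph{bad} at step $i$ if no layer-$(i+1)$ edge leaves it. If $w(\mathcal{C})$ (its boundary in $G$, which for a component strictly inside $C^*$ equals its boundary within $G[C^*]$ plus its share of cut edges — but cut edges are killed by $A$, so it's the boundary within $G[C^*]$) is at least $c\,\C/\epsilon$, then it has $\geq c\C/\epsilon$ internal boundary edges and the probability it is bad is $(1-p/L)^{c\C/\epsilon}\leq e^{-pc\C/(\epsilon L)} = e^{-c/80}$... wait, that is only a constant, not small. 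So instead: a component with boundary $\geq c\C/\epsilon$ in $G[C^*]$ is bad with probability at most $e^{-(c/80)}<1$; choosing $c$ large makes this $<1/e$, and then the Markov + recursion argument of \Cref{thm:UpperTh} shows the number of "large" components halves with probability $\geq 1/2$ per layer, so after $L=20\log n$ layers, w.h.p. there are no components inside $C^*$ with boundary-in-$G[C^*]$ exceeding $c\C/\epsilon$. (4) At that point pick any component $\mathcal{C}$; it has $w(\mathcal{C}) = O(\epsilon^{-1}\C)$, and it was added to $\mathcal{F}$ in some layer. (5) Multiply probabilities: $\Pr[A]\cdot\Pr[\text{small component} \mid A] \geq n^{-\epsilon}\cdot(1-o(1)) \geq n^{-\epsilon}/2$.

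The hard part will be Step (3): making precise the claim that "large" components shrink geometrically. The subtlety is that when two components merge, the boundary of the merged component can be \emph{larger} than either piece (boundaries don't simply add but they don't shrink either — edges between the two merged pieces leave the boundary, edges to the outside stay), so I need to argue the merging process cannot create a giant component without having passed through a stage where it was exposed in $\mathcal{F}$ — i.e., I should track, for the specific component that will eventually be large, the \emph{last} layer at which it had small boundary, or run the recursion on a cleverly chosen monotone quantity. The cleanest fix is probably to not demand \emph{all} components stay small, but to track one component: follow the component $\mathcal{C}_i$ of $G_i[C^*]$ containing a fixed seed vertex $v_0\in C^*$, and show that either at some layer $i$ it has small boundary (and hence that boundary cut is in $\mathcal{F}$), or it keeps absorbing new vertices at rate forced by its large boundary — but a component can have large boundary in $G$ yet small boundary in $G[C^*]$ only if many cut edges are incident, which $A$ forbids; so large boundary in $G$ $\Leftrightarrow$ large boundary in $G[C^*]$, and large boundary forces (with prob. $\geq 1-1/e$ per layer) a merge, which strictly grows $|\mathcal{C}_i\cap C^*|$; since $|C^*|\leq n$, after $O(\log n)$ "successful" layers $\mathcal{C}_i$ must have small boundary at least once. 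Formalizing this monotone "either small now or grow" dichotomy, and checking the constants so that the per-layer success probability exceeds $1/2$ while the boundary bound stays $O(\epsilon^{-1}\C)$, is the technical crux.
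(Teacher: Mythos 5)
Your overall intuition is right and matches the paper's — the factor $n^{-\eps}$ comes from the event that a fixed minimum cut has no sampled edges — but Step~(3) has a genuine gap that you yourself flag. You try to argue that, inside $G[C^*]$, some component with boundary $O(\eps^{-1}\C)$ must eventually appear. The ``halving of large components'' recursion does not give you this: when two large components merge the result can have boundary as large as (or larger than) either piece, so the process can race all the way to a single component equal to $C^*$ without ever passing through a small-boundary intermediate stage that you could deposit into $\mathcal{F}$. Your proposed repairs (track the last small-boundary layer, or follow a seed vertex $v_0$) are not developed, and the seed-vertex variant in particular does not obviously terminate in $O(\log n)$ layers since each ``merge'' can absorb a single vertex, allowing $\Theta(n)$ merges. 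In addition, a subtler issue: conditioning on $A$ zeroes out the sampling probability of the (up to $\C$) boundary edges of a component that lie in the min-cut, so the ``bad with probability $\leq 1/e$'' bound needs $\alpha$ increased to absorb a $-\C$ term, which you do not account for.

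The paper sidesteps all of this by \emph{not conditioning}. It declares an experiment ``successful'' if $\mathcal{F}$ contains an $\alpha$-minimum cut \emph{or} $G_L$ is connected, and then the layering recursion of \Cref{thm:UpperTh} goes through unchanged to give $\Pr[\text{success}]\geq 1-1/n$: at each layer, if no component has a small boundary then every component has boundary $\geq\alpha\C$, hence is bad with probability $\leq(1-p/L)^{\alpha\C}\leq 1/e$ (this is exactly why $\alpha=\Theta(1/\eps)$ is forced), and the component count shrinks geometrically. Then one applies the elementary union bound $\Pr[\text{success}]\leq \Pr[\mathcal{F}\text{ has good cut}]+\Pr[G_L\text{ connected}]$ together with $\Pr[G_L\text{ connected}]\leq 1-n^{-\eps}$ (your Step~(1) observation) to get $\Pr[\mathcal{F}\text{ has good cut}]\geq(1-1/n)-(1-n^{-\eps})\geq n^{-\eps}/2$. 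If you do wish to salvage your conditioning route, the correct move is not to restrict to $G[C^*]$ but to run the \emph{global} recursion conditional on $A$ (with $\alpha$ enlarged as noted above): whp either $\mathcal{F}$ gets a good cut or $M_{L-1}\leq 2$, but $A$ forces $M_{L-1}\geq 2$ and no component crosses the min-cut, so $M_{L-1}=2$ means $C^*$ itself is a component of $G_{L-1}$ and $(C^*,V\setminus C^*)\in\mathcal{F}$. Either way you should abandon the boundary-tracking inside $C^*$.
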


\begin{proof}
  Fix an edge sampling probability $p=\frac{\epsilon\log n}{2\C}$ for an
  $\epsilon \in (0, 1)$ and let $\alpha= 40 \epsilon^{-1}$. We say that a
  sampling and layering experiment is \emph{successful} if
  $\mathcal{F}$ contains an $\alpha$-minimum cut or if the sampled
  graph $G_L=(V, S_{L-})$ is connected. We first show that each
  experiment is \emph{successful} with probability at least
  $1-\frac{1}{n}$. The proof of this part is very similar to that of
  \Cref{thm:UpperTh}.

  For an arbitrary layer number $ 1\leq i \leq L-1$, consider graph
  $G_i=(V, S_{i-})$. If $M_i=1$ meaning that $G_i$ is connected, then
  $G_L$ is also connected. Thus, in that case, the experiment is
  successful and we are done. In the more interesting case, suppose
  $M_i \geq 2$. For each component $\mathcal{C}$ of $G_i$, consider
  the cut $(\mathcal{C}, V\setminus \mathcal{C})$. If any of these
  cuts is $\alpha$-minimum, then the experiment is successful as then,
  set $\mathcal{F}$ contains an $\alpha$-minimum cut. On the other
  hand, suppose that for each component $\mathcal{C}$ of $G_i$, we
  have $w(C) \geq \alpha \C$. Then, for each such component
  $\mathcal{C}$, each of the edges of cut $(\mathcal{C}, V\setminus
  \mathcal{C})$ has probability $\frac{p/L}{1-(i\, p)/L} \geq p/L$ to
  be in set $S_{i+1}$ and since $w(\mathcal{C}) \geq \alpha \C$, where
  $\alpha= 20 \epsilon^{-1}$, the probability that none of the edges
  of this cut in set $S_{i+1}$ is at most $(1-p/L)^{\alpha\C} \leq
  e^{\frac{p}{L}\cdot \alpha \C}=e^{-\frac{\epsilon\log n}{2\C}\cdot
    \frac{1}{L}\cdot\frac{40}{\epsilon}\cdot\C} =1/e$. Hence, the probability that component $\mathcal{C}$
  is \emph{bad} as defined in the proof of \Cref{thm:UpperTh} (i.e.,
  in graph $G_{i+1}$, it does not get connected to any other
  component) is at most $1/e$. The rest of the proof can be completed
  exactly as the last paragraph of of the proof of \Cref{thm:UpperTh},
  to show that $$\Pr[\textit{successful experiment}] \geq
  1-{1}/{n}.$$  
	\fullOnly{Thus we have a bound on the probability that
  $\mathcal{F}$ contains an $\alpha$-minimum cut or that the sampled
  graph $G=(V, S_{L-})$ is connected. However, in \Cref{thm:main}, we
  are only interested in the probability of $\mathcal{F}$ containing
  an $\alpha$-minimum cut.}
	Using a union bound, we know that
  \begin{equation}\label{eq:union-of-success}
    \Pr[\textit{successful experiment}] \leq 
    \Pr[\mathcal{F}\textit{ contains an }\alpha\textit{-min cut}] +
    \Pr[G_L\textit{ is connected}]. \nonumber
  \end{equation}
  On the other hand, $$\Pr[G_L\textit{ is connected}] \leq
  1-n^{-\epsilon}.$$ This is because, considering a single mininmum cut of size
  $\C$, the probability that none of the edges of this cut are sampled, in which case the
  sampled subgraph is disconnected, is $(1-\frac{\epsilon \log n}{2\C})^{\C} \geq n^{-\epsilon}$. 
	Hence, we can conclude that $$\quad\,\Pr[\mathcal{F}\textit{ contains an
  }\alpha\textit{-min cut}] \geq (1-{1}/{n}) - (1- n^{-\epsilon}) =
  n^{-\epsilon} - {1}/{n} \geq n^{-\epsilon}/2.$$
\end{proof}

\textbf{Remark:} It was brought to our attention that the approach of \Cref{thm:main} bears some cosmetic resemblance to the technique of Goel, Kapralov and Khanna~\cite{RefinementSampling}. As noted by Kapralov\cite{KapralovPersonalCommunication}, the approaches are fundamentally different; the only similarity is having $O(\log n)$ repetitions of sampling. In \cite{RefinementSampling}, the objective is to estimate the \emph{strong-connectivity} of edges via a streaming algorithm. See \cite{RefinementSampling} for related definitions and note also that strong-connectivity is (significantly) different from (standard) connectivity. In a nutshell, \cite{RefinementSampling} uses $O(\log n)$ iterations of sub-sampling, each time further sparsifying the graph until at the end, all edges with strong-connectivity less than a threshold are removed (and identified) while edges with strong connectivity that is a $\Theta(\log n)$ factor larger than the threshold are preserved (proven via Benczur-Karger's sparsification).  

\subsection{Testing Cuts}\label{subsec:CutTester}
So far we know that $\mathcal{F}$ contains an $\alpha$-minimum cut
with a reasonable probability. We now need to devise a distributed
algorithm to read or test the sizes of the cuts in $\mathcal{F}$ and
find that $\alpha$-minimum cut, in $O(D)+\tilde{O}(\sqrt{n})$
rounds. \fullOnly{In the remainder of this section, we explain our approach to
this part.}

Consider a layer $i$ and the graph $G_i = (V,
S_{i-})$. \fullOnly{For each
component $\mathcal{C}$ of $G_i$, $diam(\mathcal{C})$ rounds is enough to
read the size of the cut $(\mathcal{C}, V\setminus \mathcal{C})$ such
that all the nodes in component $\mathcal{C}$ know this size. However,
$diam(\mathcal{C})$ can be considerably larger than $D = diam(G)$ and
thus, this method would not lead to a round complexity of
$\tilde{O}(D+\sqrt{n})$. To overcome this problem, n}\shortOnly{N}otice that we do
not need to read the exact size of the cut $(\mathcal{C}, V\setminus \mathcal{C})$. Instead, it is enough to
devise a \emph{test} that \emph{passes} w.h.p.\ if $w(C) \leq
\alpha\C$, and \emph{does not pass} w.h.p. if $w(C) \geq
(1+\delta)\alpha \C$, for a small constant $\delta \in (0, 1/4)$. In
the distributed realization of such a test, it would be enough if all
the nodes in $\mathcal{C}$ consistently know whether the test passed
or not. Next, we explain a simple algorithm for such a test. This test
itself uses random edge sampling. Given such a test, in each layer $i
\in [1, \dots, L-1]$, we can test all the cuts and if any cut passes
the test, meaning that, w.h.p., it is a $((1+\delta)\alpha)$-minimum
cut, then we can pick such a cut.\footnote{This can be done for
  example by picking the cut which passed the test and for which the
  related component has the smallest id among all the cuts that passed
  the test.}

\begin{lemma}\label{lem:CutTester} Given a subgraph $G'=(V, E')$ of
  the network graph $G=(V,E)$, a threshold $\kappa$ and $\delta \in (0,1/4)$, there exists a randomized distributed
  cut-tester algorithm with round complexity
  $\Theta\big(D+\frac{1}{\delta^2}\sqrt{n} \log n \log^* n\big)$ such
  that, w.h.p., for each node $v\in V$, we have: Let
  $\mathcal{C}$ be the connected component of $G'$ that contains
  $v$. If $w(\mathcal{C}) \leq \kappa /(1+\delta)$, the test passes at
  $v$, whereas if $w(\mathcal{C}) \geq \kappa (1+\delta)$, the test
  does not pass at $v$.
\end{lemma}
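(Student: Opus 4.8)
The plan is to avoid computing $w(\mathcal{C})$ exactly and instead to \emph{estimate} the probability that $\mathcal{C}$ survives as an isolated connected component under a light random edge sampling of $G$; this probability pins down $w(\mathcal{C})$ to the precision we need, and it can be estimated by running many cheap connectivity-type queries in parallel through Thurimella's black box.

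Concretely, I would fix a sampling rate $q\in(0,1)$ with $(1-q)^{\kappa}=\tfrac12$ (so $q=\Theta(1/\kappa)$) and a repetition count $R=\Theta(\delta^{-2}\log n)$. As preprocessing, run Thurimella's connected-component algorithm once on $G'$ so that every node $v$ learns the label $\ell(v)$ of its $G'$-component $\mathcal{C}$, and in particular which of its incident edges leave $\mathcal{C}$; this costs $O(D+\sqrt{n}\log^{*}n)$ rounds. Then, for $r=1,\dots,R$ with independent randomness, perform an edge-sampling experiment $S_r\subseteq E$ (each edge, or each copy of a parallel edge, sampled with probability $q$), call a node \emph{$r$-touched} if some sampled incident edge of it leaves its $G'$-component, and compute the single bit $Y_r(\mathcal{C})\in\{0,1\}$ that is $1$ iff no node of $\mathcal{C}$ is $r$-touched, i.e.\ iff $\mathcal{C}$ is still a maximal connected component of $(V,E'\cup S_r)$. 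The key observation is that, for a fixed $r$, computing $Y_r(\cdot)$ for all components (with all nodes of a component agreeing on the value) is exactly one instance of connected-component identification on $G'$, run with modified node labels in which every $r$-touched node gets a label smaller than all others: a node's resulting component label is the special small value iff its component contains a touched node. By the pipelining property of Thurimella's algorithm (Section~\ref{subsec:Thurimella}), all $R$ of these instances together cost $O(D+R\sqrt{n}\log^{*}n)=O\big(D+\delta^{-2}\sqrt{n}\log n\log^{*}n\big)$ rounds. Finally, each node outputs \emph{pass} iff $\frac1R\sum_{r=1}^{R}Y_r(\mathcal{C})\ge\tfrac12$, which is consistent across a component.

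For correctness, observe that conditioned on the ($G'$-)component structure, which is deterministic, the bits $Y_1(\mathcal{C}),\dots,Y_R(\mathcal{C})$ are i.i.d.\ Bernoulli with mean $\pi_{\mathcal{C}}=(1-q)^{w(\mathcal{C})}$, since $\mathcal{C}$ fails to be isolated in $(V,E'\cup S_r)$ exactly when at least one of the $w(\mathcal{C})$ edges of the cut $(\mathcal{C},V\setminus\mathcal{C})$ is sampled. Using $(1-q)^{\kappa}=\tfrac12$ and $\delta<1/4$, one gets $\pi_{\mathcal{C}}\ge 2^{-1/(1+\delta)}\ge\tfrac12+c\delta$ when $w(\mathcal{C})\le\kappa/(1+\delta)$, and $\pi_{\mathcal{C}}\le 2^{-(1+\delta)}\le\tfrac12-c\delta$ when $w(\mathcal{C})\ge\kappa(1+\delta)$, for an absolute constant $c>0$. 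A Hoeffding bound on $\frac1R\sum_{r}Y_r(\mathcal{C})$ with $R=\Theta(\delta^{-2}\log n)$ then makes the empirical mean land on the correct side of $\tfrac12$ with probability $1-n^{-\Omega(1)}$, and a union bound over the at most $n$ components yields the claim for all nodes simultaneously w.h.p.

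The probabilistic part is routine; the real obstacle is the round complexity. The naive approach --- aggregating a $(1+\delta)$-estimate of $w(\mathcal{C})$ along a spanning tree of $\mathcal{C}$ --- fails because $\mathcal{C}$ can have diameter $\Omega(n)$ even when $\mathrm{diam}(G)=O(\log n)$. The reduction above circumvents this by extracting only a single bit per component per experiment and phrasing that bit as a connected-component-identification query on $G'$, so that Thurimella's $O(D+k\sqrt{n}\log^{*}n)$-round pipelined solver applies with $k=R=\tilde{\Theta}(\delta^{-2})$; after that, matching the claimed $\Theta(D+\delta^{-2}\sqrt{n}\log n\log^{*}n)$ bound and checking the constants in the two Bernoulli-mean estimates is mechanical. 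A minor technical point, handled within the per-experiment cost, is that in the multigraph/weighted setting a node must signal to each neighbor (using $O(\log n)$ bits) whether \emph{some} copy of the connecting parallel edge was sampled; since $\mathcal{C}$ is isolated iff no copy of any crossing multi-edge is sampled, the mean is still $(1-q)^{w(\mathcal{C})}$ and this bookkeeping costs only a constant factor.
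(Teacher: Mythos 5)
Your proposal is correct and follows essentially the same approach as the paper: fix a sampling rate $q=1-2^{-1/\kappa}$ so that a cut of exactly $\kappa$ (weighted) edges survives with probability $1/2$, repeat $\Theta(\delta^{-2}\log n)$ independent edge-sampling experiments, let every node of each $G'$-component learn the same one-bit outcome of each experiment through pipelined Thurimella calls, and finish with a Hoeffding bound and a union bound over the at most $n$ components. The only deviation is cosmetic: the paper runs Thurimella on $H_j=(V,E'\cup E_j)$ and detects a hit cut by comparing both the min and the max label against the stored componentID, whereas you keep the subgraph $G'$ fixed and instead relabel ``touched'' nodes with a special small value and compare only the min label; the two implementations are equivalent in effect and differ by at most a constant factor in round complexity.
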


For pseudo-code, we refer to \fullOnly{\Cref{App:LayeringCut}}\shortOnly{the full
version \cite{Cut-FullVersion}}. We first run Thurimella's connected
component identification algorithm (refer to \Cref{subsec:Thurimella})
on graph $G$ for subgraph $G'$, so that each node $v\in V$ knows the
smallest id in its connected component of graph $G'$. Then, each node
$v$ adopts this label $componentID$ as its own id (temporarily). Thus,
nodes of each connected component of $G'$ will have the same id. Now,
the test runs in $\Theta(\log^2 n/\delta^2)$ \emph{experiments}, each
as follows: in the $j^{th}$ experiment, for each edge $e \in E
\setminus E'$, put edge $e$ in set $E_j$ with probability
$p'=1-2^{-\,\frac{1}{\kappa}}$. Then, run Thurimella's algorithm on
graph $G$ with subgraph $H_j=(V, E' \cup E_j)$ and with the new ids
twice, such that at the end, each node $v$ knows the smallest and the
largest id in its connected component of $H_j$. Call these new labels
$\ell^{min}_j(v)$ and $\ell^{max}_j(v)$, respectively. For a node $v$
of a component $\mathcal{C}$ of $G_i$, we have that $\ell^{min}_j(v)
\neq v.id$ or $\ell^{max}_j(v) \neq v.id$ iff at least one of the
edges of cut $(\mathcal{C}, V\setminus \mathcal{C})$ is sampled in
$E_j$, i.e., $(\mathcal{C}, V\setminus \mathcal{C}) \cap E_j \neq
\emptyset$. Thus, each node $v$ of each component $\mathcal{C}$ knows
whether $(\mathcal{C}, V\setminus \mathcal{C}) \cap E_j \neq
\emptyset$ or not. Moreover, this knowledge is consistent between all
the nodes of component $\mathcal{C}$. After $\Theta(\log n/\delta^2)$
experiments, each node $v$ of component $\mathcal{C}$ considers the
test \emph{passed} iff $v$ noticed $(\mathcal{C}, V\setminus
\mathcal{C}) \cap E_j \neq \emptyset$ in at most half of the
experiments. We defer the calculations of the proof of
\Cref{lem:CutTester} to \fullOnly{\Cref{App:LayeringCut}}\shortOnly{of
  the full version
\cite{Cut-FullVersion}}.

\subsection{Putting the Pieces Together}\label{subsec:tie}
We now explain how to put together the pieces presented in the previous subsections to get the claim of \Cref{thm:MinCutLayering}.
\begin{proof}[Proof of \Cref{thm:MinCutLayering}]
For simplicity, we first explain an $O(\eps^{-1})$ minimum-cut approximation algorithm with time complexity $O((D+\sqrt{n}\log^{*}n \log n) n^{\epsilon} \log^2n \log\log n)$. Then, explain how to reduce it to the claimed bound of $O(D) + O(n^{0.5+\epsilon} \log^{3}n \log\log n \log^* n)$ rounds.

We first find an $O(\log n)$ approximation $\tilde{\C}$ of $\C$, using \Cref{crl:approx}, in time $O(D) + O(\sqrt{n}\log^{*})\log^2 n)$. This complexity is subsumed by the complexity of the later parts. After this, we use $\Theta(\log \log n)$ guesses for a $2$-approximation of $\C$ in the form $\C'_i=\tilde{C} 2^{i}$ where $i \in [-\Theta(\log \log n), \Theta(\log\log n)]$. For each such guess $\C'_i$, we have $n^{\epsilon} \log n$ \emph{epochs} as follows:

In each \emph{epoch}, we sample edges with probability $p=\frac{\epsilon \log n}{2\C'}$ and assign each edge to a random layer in $[1, \dots, L]$, where $L=20\log n$. For each layer $i \in [1, \dots, L-1]$, we let $S_i$ be the set of sampled edges of layer $i$ and let $S_{i-}=\cup_{j=1}^{i} S_j$. Then, for each $i \in [1, \dots, L]$, we use the Cut-Tester Algorithm (see \Cref{subsec:CutTester}) on graph $G$ with subgraph $G_{i}= (V, S_{i-})$, threshold $\kappa = 50\C'/\epsilon$, and with parameter $\delta=1/8$. This takes $O((D+\sqrt{n}\log n\log^{*}n) \log n)$ rounds (for each layer). If in a layer, a component passes the test, it means its cut has size at most $O(\C'/\epsilon)$, with high probability. To report the results of the test, we construct a BFS tree rooted in a leader in $O(D)$ rounds and we convergecast the minimum $componentID$ that passed the test, in time $O(D)$. We then broadcast this $componentID$ to all nodes and all nodes that have this $componentID$ define the cut that is $O(\C'/\epsilon)$-minimum, with high probability. 

Over all the guesses, we know that there is a guess $\C'_j$ that is a $2$-approximation of $\C$. In that guess, from \Cref{thm:main} and a Chernoff bound, we know that at least one cut that is an $O(\epsilon^{-1})$-minimum cut will pass the test. We stop the process in the smallest guess for which a cut passes the test. 

Finally, to reduce the time complexity to $O(D) + O(n^{0.5+\epsilon} \log^{3}n \log\log n \log^* n)$ rounds, note that we can parallelize (i.e., pipeline) the $\Theta(n^{\epsilon} \log^2 n \log \log n)$ runs of Cut-Testing algorithm, which come from $\Theta(\log\log n)$ guesses $\C'_i$, $n^{\epsilon} \log n$ epochs for each guess, and $\Theta(\log n)$ layers in each epoch. We can do this pipe-lining simply because these instances of Cut-Testing do not depend on the outcomes of each other and $k$ instances of Thurimella's algorithms can be run together in time $O(D+k\sqrt{n}\log^* n)$ rounds (refer to \Cref{subsec:Thurimella}). To output the final cut, when doing the convergecast of the Cut-Testing results on the BFS, we append the edge-connectivity guess $\C'_j$, epoch number, and layer number to the $componentID$. Then, instead of taking minimum on just $componentID$, we choose the $componentID$ that has the smallest tuple  (guess $\C'_j$, epoch number, layer number, $componentID$). Note that the smallest guess $\C'_j$ translates to the smallest cut size, and the other parts are simply for tie-breaking.
\end{proof}


\section{Min-Cut Approximation via Matula's Approach}\label{sec:Matula}
In \cite{Matula93}, Matula presents an elegant centralized algorithm
that for any constant $\eps>0$, finds a $(2+\eps)$-min-cut in
$O(|V|+|E|)$ steps. Here, we explain how with
the help of a few additional elements, this general approach can be
used in the distributed setting, to find a $(2+\eps)$-minimum cut in
$O\big((D+ \sqrt{n} \log^{*} n) \log^2 n \log \log n \cdot
\frac{1}{\eps^5}\big)$ rounds. We first recap the concept of
\emph{sparse certificates for edge connectivity}.

\begin{definition} For a given unweighted multi-graph $H=(V_H, E_H)$ and a
  value $k>0$, a set $E^* \subseteq E_H$ of edges is a \emph{sparse
    certificate for $k$-edge-connectivity} of $H$ if (1) $|E^*| \leq k |V_{H}|$,
  and (2) for each edge $e\in E_H$, if there exists a cut $(\mathcal{C},
  V\setminus \mathcal{C})$ of $H$ such that $|(\mathcal{C})|\leq k$ and $e
  \in(\mathcal{C}, V\setminus \mathcal{C})$, then we have $e\in E^*$.
\end{definition}

Thurimella~\cite{Thurimella97} presents a simple distributed algorithm
that finds a sparse certificate for $k$-edge-connectivity of a network
graph $G$ in $O(k(D+ \sqrt{n} \log^{*} n))$ rounds. With simple
modifications, we get a generalized version, presented in
\Cref{lem:ThurimellaCertificate}. Details of these modification appear
in \fullOnly{\Cref{App:Matula}}\shortOnly{the full version of this paper
  \cite{Cut-FullVersion}}.

\begin{lemma}\label{lem:ThurimellaCertificate} Let $E_c$ be a subset of the edges of the network graph $G$ and define the virtual graph $G'=(V', E')$ as the multi-graph that is obtained by contracting all the edges of $G$ that are in $E_c$. Using the modified version of Thurimella's certificate algorithm, we can find a set $E^* \subseteq E \setminus E_c$ that is a sparse certificate for $k$-edge-connectivity of $G'$, in $O(k(D+ \sqrt{n} \log^{*} n))$ rounds.
\end{lemma}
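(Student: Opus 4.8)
The plan is to mimic Thurimella's sparse-certificate construction~\cite{Thurimella97}, which builds the certificate by peeling off $k$ successive maximal spanning forests: set $F_1$ to be a maximal spanning forest of the graph, then $F_2$ a maximal spanning forest after the edges of $F_1$ are removed, and so on; the union $E^* = \bigcup_{j=1}^k F_j$ is a sparse certificate for $k$-edge-connectivity, and each forest computation is a single invocation of the Kutten--Peleg MST algorithm, for a total of $O(k(D+\sqrt n \log^* n))$ rounds. I would run exactly this scheme on the virtual graph $G'$, with the single twist that every forest must consist only of edges of $E\setminus E_c$ and must be spanning \emph{in $G'$}, i.e.\ after contracting $E_c$.

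First I would run Thurimella's connected-component algorithm on $G$ for the subgraph $(V,E_c)$, so that every node learns (say) the smallest id in its $E_c$-component; this common label names the vertices of $G'$, and $|V'|\le n$. The heart of the construction is a subroutine that, given the set $A\subseteq E\setminus E_c$ of currently available real edges, computes a maximal spanning forest of $(V',A)$. I would realize it by one run of Kutten--Peleg on $G$ restricted to the edge set $E_c\cup A$, assigning every edge of $E_c$ a strictly smaller weight than every edge of $A$ (ties broken by edge ids). The resulting minimum spanning forest selects only $E_c$-edges inside each $E_c$-component (they are cheaper and already span it) and uses $A$-edges only to merge distinct $E_c$-components; hence the set $F$ of $A$-edges it selects is precisely a maximal spanning forest of $(V',A)$, with $|F|\le |V'|-1$. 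Iterating this $k$ times — each time discarding the freshly selected edges from the available pool — produces edge-disjoint forests $F_1,\dots,F_k\subseteq E\setminus E_c$, and I would output $E^*=\bigcup_{j=1}^k F_j$, each node recording which of its incident edges lie in $E^*$. The running time is the initial component identification plus $k$ MST computations, i.e.\ $O(k(D+\sqrt n\log^* n))$ rounds.

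Correctness is the standard Nagamochi--Ibaraki counting argument transported to $G'$. Property (1) is immediate: $|E^*|\le\sum_{j=1}^k|F_j|\le k(|V'|-1)< k|V'|$. For property (2), suppose $e\in E\setminus E_c$ crosses some cut $(\mathcal C, V'\setminus\mathcal C)$ of $G'$ with at most $k$ crossing edges, yet $e\notin E^*$. Then $e$ was available in every iteration $j$, so by maximality of $F_j$ the two $G'$-endpoints of $e$ are already joined by a path in $F_j$, and that path crosses $(\mathcal C, V'\setminus\mathcal C)$ an odd number of times, hence at least once. Picking one crossing edge from each $F_j$ and using that $F_1,\dots,F_k$ are pairwise edge-disjoint yields $k$ distinct crossing edges inside $E^*$; together with $e$ (which crosses the cut but lies in no $F_j$) this exhibits $k+1$ crossing edges, contradicting the bound of $k$. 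Therefore $e\in E^*$.

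The main obstacle is the middle step: making ``maximal spanning forest of the contracted graph using only real edges'' operational in $\congest$ and verifying that the two-level-weight reduction to Kutten--Peleg computes it correctly and still in $O(D+\sqrt n\log^* n)$ rounds — in particular that an $E_c$-internal edge is never passed over in favour of an $A$-edge, that the selected $A$-edges form a \emph{maximal} (not merely some) spanning forest of $(V',A)$, and that the bookkeeping of already-used edges and of which $E_c$-component each endpoint belongs to stays consistent across nodes. The choice of spanning tree inside an $E_c$-component is irrelevant, since only the selected $A$-edges enter $E^*$; everything else is the routine counting above.
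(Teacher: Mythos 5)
Your construction is essentially the paper's proof: iteratively run Kutten--Peleg with weights biased so that $E_c$-edges are always strictly preferred, and collect the non-$E_c$ edges of the resulting spanning tree. The only cosmetic difference is that the paper keeps the already-selected edges in play at a still higher weight level (weights $0$, $1$, $2$ for $E_c$, fresh edges, and $E^*$ respectively) rather than deleting them from the pool, which guarantees the MST subroutine always runs on the connected graph $G$; your peel-off-$k$-forests formulation makes the Nagamochi--Ibaraki counting a bit more explicit but is otherwise the same argument.
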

  Following the approach of Matula's centralized algorithm\footnote{We remark that
    Matula~\cite{Matula93} never uses the name \emph{sparse
      certificate} but he performs \emph{maximum adjacency search}
    which indeed generates a sparse certificate.}~\cite{Matula93}, and
  with the help of the sparse certificate algorithm of
  \Cref{lem:ThurimellaCertificate} and the random sparsification technique of Karger~\cite{KargerSTOC94}, we get the following result.
		
\begin{theorem}\label{thm:distributedmatula}
  There is a distributed algorithm that, for any constant $\eps>0$,
  finds a $(2+\eps)$-minimum cut in $O((D+ \sqrt{n} \log^{*} n) \log^2
  n \log \log n \cdot \frac{1}{\eps^5})$ rounds.
\end{theorem}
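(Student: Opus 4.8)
The plan is to mimic Matula's centralized $(2+\eps)$-approximation algorithm in the distributed setting, using the sparse certificate procedure of \Cref{lem:ThurimellaCertificate} as the main primitive. Recall the structure of Matula's argument: repeatedly extract a sparse certificate $E^*$ for $k$-edge-connectivity with $k = (2+\eps)\cdot(\text{current estimate of min degree or connectivity})$; the certificate has $O(kn)$ edges, so on average some component is incident to few non-certificate edges, and either the certificate already reveals a small cut or we may contract a connected piece of the certificate that is guaranteed to lie on one side of every $\le k$-cut of interest, strictly shrinking the graph. Iterating $O(\log n)$ phases of such contractions reduces the vertex count geometrically, and in each phase one finds a candidate cut; the best candidate over all phases is a $(2+\eps)$-minimum cut. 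First I would set up a $2$-approximate guess $\tilde\lambda$ of $\lambda$ (for instance via \Cref{crl:approx}, refined by trying $\Theta(\log\log n)$ powers of two), and fix $k = \lceil (2+\eps)\tilde\lambda\rceil$ for each guess.

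Next I would implement one phase distributedly. Using \Cref{lem:ThurimellaCertificate} on the virtual graph $G'$ obtained by contracting the previously-chosen edge set $E_c$, compute a sparse certificate $E^*\subseteq E\setminus E_c$ for $k$-edge-connectivity of $G'$ in $O(k(D+\sqrt n\log^* n))$ rounds. Then run Thurimella's connected-component algorithm on the certificate subgraph $(V,E_c\cup E^*)$ so every node learns its component label; convergecast on a BFS tree of $G$ lets a leader learn, for each certificate component, the number of incident non-certificate edges (its ``boundary cost''), and hence identify both the cheapest boundary cut and a component whose boundary is $O(k)$ (which must exist by the $|E^*|\le k|V'|$ bound). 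The cheap boundary cut is recorded as a candidate; the $O(k)$-boundary component is added to $E_c$ for the next phase. Because each phase removes at least one vertex from $G'$ and, by the averaging argument, a constant fraction of them across a logarithmic window, $O(\log n)$ phases suffice, giving $O(k(D+\sqrt n\log^* n)\log n)$ per guess, i.e.\ $O((D+\sqrt n\log^* n)\log^2 n\log\log n/\eps)$ so far. The extra $\eps^{-4}$ factor in the claimed bound comes from Karger's sparsification: sampling each edge with probability $\Theta(\log n/(\eps^2\lambda))$ yields a graph with $\tilde O(n/\eps^2)$ edges in which all cuts are preserved up to $(1\pm\eps)$ (Benczur--Karger), so $k$ in the certificate calls becomes $O(\log n/\eps^2)$ rather than $O(\lambda)$, removing the dependence on $\lambda$ at the cost of polynomial factors in $1/\eps$; combining the sparsification error with the $(2+\eps)$ certificate guarantee and rescaling $\eps$ gives the final approximation factor $2+\eps$.

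The distributed subtleties I would need to handle carefully: (i) all operations must take place on the \emph{virtual} contracted graph $G'$ while messages still travel over $G$ — but since $E_c$ is always a subgraph whose components are contracted, Thurimella's label-propagation handles this transparently, and the diameter of $G$ (not $G'$) governs the BFS/convergecast cost, which is fine; (ii) the convergecast must aggregate, per component, a count of boundary edges and a minimum, which is standard tree aggregation in $O(D)$ rounds once labels are known, with $O(\log n)$-bit messages since counts are polynomially bounded; (iii) pipelining — the $\Theta(\log\log n)$ guesses, $O(\log n)$ phases, and $\Theta(\log n/\eps^2)$ sampling repetitions give $\mathrm{poly}\log\cdot\eps^{-c}$ independent Thurimella instances, which run together in $O(D + k\sqrt n\log^* n)$ time by the pipelining remark of \Cref{subsec:Thurimella}, and this is exactly what absorbs the various logarithmic and $\eps^{-1}$ factors into the stated bound.

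The main obstacle I expect is proving that the phase structure actually terminates in $O(\log n)$ rounds with a correct candidate — i.e.\ faithfully reproducing Matula's invariant that the contracted set $E_c$ never separates the two sides of a minimum cut of the (sparsified) graph, so that the min-cut survives into every phase until it is output as a cheap-boundary candidate. This requires the sparse-certificate property in exactly the form of the definition above (every edge lying in some $\le k$-cut is in $E^*$), the observation that contracting a certificate component therefore cannot destroy any $\le k$-cut, and an averaging/potential argument that the number of such components drops geometrically; translating this cleanly to the contracted virtual graph, while tracking how the Benczur--Karger $(1\pm\eps)$ error compounds over $O(\log n)$ phases (it does not, since each phase re-examines the \emph{same} sampled graph), is the delicate part. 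The pure round-complexity bookkeeping, by contrast, is routine given \Cref{lem:ThurimellaCertificate} and the pipelining lemma.
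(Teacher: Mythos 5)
Your scaffolding matches the paper's: Matula's phase structure, the sparse-certificate primitive of Lemma~\ref{lem:ThurimellaCertificate}, Karger's sparsification to replace $\lambda$ by $\Theta(\log n/\eps^2)$ in the certificate parameter, and pipelining of Thurimella instances to keep the $D$ term additive. But the central step has the roles of $E^*$ and $E\setminus E^*$ reversed, and as written the algorithm fails. You run connected components on the \emph{certificate} subgraph $(V,E_c\cup E^*)$, count crossing \emph{non-certificate} edges as the ``boundary cost,'' and propose to contract ``a connected piece of the certificate.'' This is backwards. The sparse-certificate property says that every edge lying on a cut of size $\le k$ is in $E^*$; hence it is $E\setminus E^*$ that is safe to contract (contracting it cannot destroy any $\le k$-cut), and it is the components of $H=(V,E\setminus E^*)$ whose boundaries lie entirely inside $E^*$, which is exactly what makes the averaging argument work ($|E^*|\le k|V'|$ gives some component of $H$ a boundary of size $\le 2k/(1-\eps/10)$ once the number of components stops shrinking). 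Your $(V,E_c\cup E^*)$ will typically be a single connected component (a $k$-certificate for $k\ge1$ contains a spanning tree of each component of $G'$), so there is no nontrivial cut to examine, and contracting certificate edges can collapse the very min-cut you are trying to preserve.

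A second gap is progress and termination. You contract only the one cheapest-boundary component per phase and assert ``a constant fraction across a logarithmic window,'' which is not argued and would, in the worst case, give $\Theta(n)$ phases. The paper instead sets $E_c\gets E\setminus E^*$ each phase (contracting \emph{all} non-certificate edges at once) and iterates only while the number of components of $H$ drops by a $(1-\eps/10)$ factor, yielding the clean dichotomy: either the count keeps shrinking geometrically ($O(\log n/\eps)$ phases), or it stalls and averaging over $E^*$ immediately produces a component of $H$ whose boundary is $\le (2+\eps)\lambda$. That candidate is then verified not by exact per-component convergecast counts but by the randomized threshold cut-tester of Lemma~\ref{lem:CutTester} against $\kappa=\tilde\lambda(2+\eps/3)$, which is the primitive that actually fits in the stated round budget. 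Your Karger-sparsification remark (errors do not compound since the same sampled graph is reused across phases) is correct and matches the paper.
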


\begin{algorithm}[H]
\caption{$(2+\eps)$-minimum cut approximation: Matula's Approach}
\begin{algorithmic}[1]
\algorithmsize
	\Statex Given a $(1+\eps/10)$-factor approximation $\tilde{\C}$ of $\C$
	\Statex
	\State $E_c \gets \emptyset$, $E^* \gets E$, $\eta_{old} \gets n$, $\eta_{new} \gets 1$
	\While{($\eta\geq 2$) \& ($\eta_{new} \leq \eta_{old} (1-\eps/10)$)} 
			\State $E_c \gets E \setminus E^*$
			\State $E^* \gets$ a sparse certificate for $\tilde{\C}(1+\eps/5)$-edge-connectivity of graph $G'=(V', E')$ obtained by contracting edges of $E_c$
			\State $\eta_{new} \gets$ number of connected components of subgraph $H=(V, E\setminus E^*)$
	\EndWhile
	\State\textbf{endwhile}
	\State Test cuts defined by connected components of graph $H=(V, E\setminus E^*)$ versus threshold $\kappa=\tilde{\C}(2+\eps/3)$
	\State Output the component that passes the test and contains the smallest id between such components
\label{alg:Matula}
\end{algorithmic}
\end{algorithm}

\begin{proof}[Proof of \Cref{thm:distributedmatula}]
  We assume that nodes know a $(1+\eps/10)$-factor approximation
  $\tilde{\C}$ of the edge connectivity $\C$, and explain a
  distributed algorithm with round complexity $O((D+ \sqrt{n} \log^{*}
  n) \log^2 n \cdot \frac{1}{\eps^4})$. Note that this assumption can
  be removed at the cost of a $\Theta(\frac{\log \log n}{\log
    {(1+\eps/10)}}) = \Theta(\log \log n \cdot \frac{1}{\eps})$ factor
  increase in round complexity by trying $\Theta(\frac{\log\log
    n}{\eps})$ exponential guesses $\tilde{\C}(1+\eps/10)^i$ for $i\in
  [0, \Theta(\frac{\log\log n}{\eps})]$
  where $\tilde{\C}$ is an $O(\log n)$-approximation of the
  edge-connectivity, which can be found by \Cref{crl:approx}.
 
  For simplicity, we first explain an algorithm that finds a
  $(2+\eps)$-minimum cut in $O(\lambda (D+ \sqrt{n} \log^{*} n) \log n
  \cdot \frac{1}{\eps^2})$ rounds. Then, we explain how to reduce the
  round complexity to $O((D+ \sqrt{n} \log^{*} n) \log^2 n \cdot
  \frac{1}{\eps^4})$. A pseudo-code is presented in Algorithm \ref{alg:Matula}.

  First, we compute a sparse certificate $E^*$ for
  $\tilde{\C}(1+\eps/5)$-edge-connectivity for $G$, using Thurimella's
  algorithm. Now consider the graph $H=(V, E\setminus E^*)$. We have
  two cases: either (a) $H$ has at most $|V|(1-\eps/10)$ connected
  components, or (b) there is a connected component $\mathcal{C}$ of
  $H$ such that $w(\mathcal{C}) \leq
  \frac{2\C(1+\eps/10)(1+\eps/5)}{1-\eps/10} \leq (2+\eps) \C$. Note
  that if (a) does not hold, case (b) follows because $H$ has at most
  $(1+\eps/5)\tilde{\C}|V|$ edges.
	
  In Case (b), we can find a $(2+\eps)$-minimum cut by testing the
  connected components of $H$ versus threshold
  $\kappa=\tilde{\C}(2+\eps/3)$, using the Cut-Tester algorithm
  presented in \Cref{lem:CutTester}.
  In Case (a), we can solve the problem recursively on the virtual
  graph $G'=(V', E')$ that is obtained by contracting all the edges of
  $G$ that are in $E_c= E \setminus E^*$. Note that this contraction
  process preserves all the cuts of size at most $\tilde{\C}(1+\eps/5)
  \geq \C$ but reduces the number of nodes (in the virtual graph) at
  least by a $(1-\eps/10)$-factor. Consequently, $O(\log(n)/\eps)$
  recursions reduce the number of components to at most $2$ while
  preserving the minimum cut.

  The dependence on $\C$ can be removed by considering the graph
  $G_S=(V,E_S)$, where $E_S$ independently contains every edge of $G$
  with probability $\Theta\big(\frac{\log n}{\eps^2\C}\big)$. It can
  be shown that the edge connectivity of $G_S$ is
  $\Theta(\log(n)/\eps^2)$ and a minimum edge cut of $G_S$ gives a
  $(1+O(\eps))$-minimum edge cut of $G$. 
  
  We now explain how to remove the dependence on $\C$ from the time
  complexity. Let $E_S$ be a subset of the edges of $G=(V,E)$ where
  each $e\in E$ is independently included in $E_S$ with probability
  $p=\frac{100\log n}{\eps^2} \cdot \frac{1}{\C}$. Then, using the
  edge-sampling result of Karger~\cite[Theorem
  2.1]{KargerSTOC94}\footnote{We emphasize that this result is
    non-trivial. The proof follows from the powerful bound of
    $O(n^{2\alpha})$ on the number of $\alpha$-minimum
    cuts~\cite{Karger93-Contraction} and basic concentration arguments
    (Chernoff and union bounds).}, we know that with high probability,
  for each $\mathcal{C} \subseteq V$, we have
  $$(1-\eps/3)\cdot|(\mathcal{C}, V\setminus \mathcal{C})|\cdot p \leq
  |(\mathcal{C}, V\setminus \mathcal{C}) \cap E_S| \leq
  (1+\eps/3)\cdot|(\mathcal{C}, V\setminus \mathcal{C})|\cdot p.$$
  Hence, in particular, we know that graph $G_{new}=(V, E_S)$ has edge
  connectivity at least $\C p (1-\eps/3)$ and at most $\C p
  (1+\eps/3)$, i.e., $\C_{new}=\Theta(\log n \cdot
  \frac{1}{\eps^2})$. Moreover, for every cut $(\mathcal{C},
  V\setminus \mathcal{C})$ that is a $(1+\eps/3)$-minimum cut in graph
  $G_{new}$, we have that $(\mathcal{C}, V\setminus \mathcal{C})$ is a
  $(1+\eps)$-minimum cut in graph $G$. We can therefore solve the
  cut-approximation problem in graph $G_{new}$, where we only need to
  use sparse certificates for $\Theta(\log n \cdot \frac{1}{\eps^2})$
  edge-connectivity\footnote{Note that, solving the cut approximation
    on the virtual graph $G_{new}$ formally means that we set the
    weight of edges outside $E\setminus E_0$ equal to zero. However,
    we still use graph $G$ to run the distributed algorithm and thus,
    the round complexity depends on $diam(G)=D$ and not on the possibly
    larger $diam(G_{new})$.}. The new round complexity becomes $O\big((D+
  \sqrt{n} \log^{*} n) \log^2 n \cdot \frac{1}{\eps^4}\big)$ rounds. 
	
	The above round complexity is assuming a $(1+\eps/10)$-approximation of edge-connectivity is known. Substituting this assumption with trying $\Theta(\log \log n /\eps)$ guesses around the $O(\log n)$ approximation obtained by \Cref{crl:approx} (and outputting the smallest found cut) brings the round complexity to the claimed bound of $$O((D+ \sqrt{n} \log^{*} n) \log^2
  n \log \log n \cdot \frac{1}{\eps^5}).$$ 
\end{proof}

\section{Lower Bounds}
\label{sec:lowerbounds}


In this section, we present our lower bounds for minimum cut approximation, which can be viewed as strengthening and
generalize some of the lower bounds of Das Sarma et al.\ from
\cite{dassarma12}.

Our lower bound uses the same general approach as the lower bounds in
\cite{dassarma12}. The lower bounds of \cite{dassarma12} are based on
an $n$-node graph $G$ with diameter $O(\log n)$ and two distinct nodes
$s$ and $r$. The proof deals with distributed protocols where node $s$
gets a $b$-bit input $x$, node $r$ gets a $b$-bit input $y$, and apart
from $x$ and $y$, the initial states of all nodes are globally
known. Slightly simplified, the main technical result of
\cite{dassarma12} (Simulation Theorem 3.1) states that if there is a
randomized distributed protocol that correctly computes the value
$f(x,y)$ of a binary function $f: \set{0,1}^b\times\set{0,1}^b\to
\set{0,1}$ with probability at least $1-\eps$ in time
$T$ (for sufficiently small $T$), then there is also a randomized
$\eps$-error two-party protocol for computing $f(x,y)$ with
communication complexity $O(TB\log n)$. For our lower bounds, we need
to extend the simulation theorem of \cite{dassarma12} to a larger
family of networks and to a slightly larger class of problems. 

\subsection{Generalized Simulation Theorem}
\label{sec:simulation}

\paragraph{Distributed Protocols:} Given a weighted network graph
$G=(V,E,w)$ ($\forall e\in E: w(e)\geq 1$), we consider distributed
tasks for which each node $v\in V$ gets some private input $x(v)$ and
every node $v\in V$ has to compute an output $y(v)$ such that the
collection of inputs and outputs satisfies some given
specification. To solve a given distributed task, the nodes of $G$
apply a distributed protocol. We assume that initially, each node
$v\in V$ knows its private input $x(v)$, as well as the set of
neighbors in $G$. Time is divided into synchronous rounds and in each
round, every node can send at most $B\cdot w(e)$ bits over each of its
incident edges $e$. We say that a given (randomized) distributed
protocol solves a given distributed task with error probability $\eps$
if the computed outputs satisfy the specification of the task with
probability at least $1-\eps$.

\paragraph{Graph Family {\boldmath$\mathcal{G}(n,k,c)$}:}
For parameters $n$, $k$, and $c$, we define the \emph{family of
  graphs} $\mathcal{G}(n,k,c)$ as follows. A weighted graph
$G=(V,E,w)$ is in the family $\mathcal{G}(n,k,c)$ iff
$V=\set{0,\dots,n-1}$ and for all $h\in \set{0,\dots,n-1}$, the total
weight of edges between nodes in $\set{0,\dots,h}$ and nodes in
$\set{h+k+1,\dots,n-1}$ is at most $c$. We consider distributed
protocols on graphs $G\in \mathcal{G}(n,k,c)$ for given $n$, $k$, and
$c$. For an integer $\eta\geq 1$, we define
$L_\eta:=\set{0,\dots,\eta-1}$ and $R_\eta:=\set{n-\eta,\dots,n-1}$.

\smallskip
Given a parameter $\eta\geq 1$ and a network $G\in
\mathcal{G}(n,k,c)$, we say that a two-party protocol between Alice
and Bob $\eta$-solves a given distributed task for $G$ with error
probability $\eps$ if a) initially Alice knows all inputs and all
initial states of nodes in $V\setminus R_\eta$ and Bob knows all
inputs and all initial states of nodes in $V\setminus L_\eta$, and b)
in the end, Alice outputs $y(v)$ for all $v\in L_{n/2}$ and Bob
outputs $y(v)$ for all $v\in R_{n/2}$ such that with probability at
least $1-\eps$, all these $y(v)$ are consistent with the specification
of the given distributed task. A two-party protocol is said to be
\emph{public coin} if Alice and Bob have access to a common random
string. 

\begin{theorem}[Generalized Simulation Theorem]\label{thm:simulation}
  Assume we are given positive integers $n$, $k$, and $\eta$, a
  parameter $c\geq 1$, as well as a subfamily
  $\tilde{\mathcal{G}}\subseteq \mathcal{G}(n,k,c)$. Further assume
  that for a given distributed task and graphs $G\in
  \tilde{\mathcal{G}}$, there is a randomized protocol with error
  probability $\eps$ that runs in $T \leq (n-2\eta)/(2k)$
  rounds. Then, there exists a public-coin two-party protocol that
  $\eta$-solves the given distributed task on graphs $G\in
  \tilde{\mathcal{G}}$ with error probability $\eps$ and communication
  complexity at most $2Bc T$.
\end{theorem}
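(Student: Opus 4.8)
The plan is to have Alice and Bob jointly and deterministically simulate the given $T$-round protocol on $G$, round by round, each tracking the states of a set of nodes that shrinks by $k$ at each end per round, and to charge the per-round communication to the bandwidth $c$ of the sliding separator built into the definition of $\mathcal{G}(n,k,c)$.

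First I would fix the randomness: Alice and Bob use the common random string to agree on the random bits of every node, so that from now on there is a single well-defined execution of the protocol on $G\in\tilde{\mathcal{G}}$; it then suffices that whichever party simulates a node can reconstruct that node's randomness, which the public coin guarantees. For $t=1,\dots,T+1$ set $a_t:=n-\eta-1-(t-1)k$ and $b_t:=\eta+(t-1)k$. The invariant I want to maintain is that at the start of round $t$ Alice knows the complete state of every node in $\{0,\dots,a_t\}$ and Bob knows the complete state of every node in $\{b_t,\dots,n-1\}$ (states include the incident-edge and private-input information each node starts with). For $t=1$ this is exactly the hypothesis, since $\{0,\dots,a_1\}=V\setminus R_\eta$ and $\{b_1,\dots,n-1\}=V\setminus L_\eta$.

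For the inductive step, assume the invariant at the start of round $t\le T$. To produce node $v$'s round-$(t+1)$ state for $v\le a_{t+1}$, Alice needs $v$'s round-$t$ state (which she has, as $a_{t+1}<a_t$) and every round-$t$ message into $v$; a message on an edge $\{v,u\}$ is determined by $u$'s round-$t$ state, so Alice computes it herself whenever $u\le a_t$, and the only remaining neighbours are those with $u\ge a_t+1=a_{t+1}+k+1$. Taking $h=a_{t+1}$ in the definition of $\mathcal{G}(n,k,c)$, the edges between $\{0,\dots,a_{t+1}\}$ and $\{a_{t+1}+k+1,\dots,n-1\}$ have total weight at most $c$, hence carry at most $Bc$ bits in round $t$; and one checks from $t\le T\le(n-2\eta)/(2k)$ that every such far endpoint $u$ lies in $\{b_t,\dots,n-1\}$, so Bob knows $u$'s round-$t$ state and can forward exactly these messages to Alice using $\le Bc$ bits. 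Symmetrically Alice sends $\le Bc$ bits to Bob. This re-establishes the invariant at the start of round $t+1$ at a cost of $\le 2Bc$ bits.

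After $T$ exchanges, at a total cost of $\le 2BcT$ bits, Alice holds the final states of all nodes in $\{0,\dots,a_{T+1}\}$ and Bob those of all nodes in $\{b_{T+1},\dots,n-1\}$; since $T\le(n-2\eta)/(2k)$ forces $a_{T+1}=n-\eta-1-Tk\ge n/2-1$ and symmetrically $b_{T+1}\le n/2$, we get $L_{n/2}\subseteq\{0,\dots,a_{T+1}\}$ and $R_{n/2}\subseteq\{b_{T+1},\dots,n-1\}$, so Alice can output $y(v)$ for all $v\in L_{n/2}$ and Bob for all $v\in R_{n/2}$, and these outputs are correct with probability $\ge 1-\eps$ because the simulation reproduces a genuine run of the protocol. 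I expect the only real work is the index bookkeeping: confirming that the retained sets stay large enough to cover $L_{n/2}$ and $R_{n/2}$ and, the one genuinely load-bearing point, that the far endpoint $u$ of every crossing edge Alice must learn about is always already inside Bob's current knowledge set (and vice versa) — this is precisely where $T\le(n-2\eta)/(2k)$ is used. A secondary point worth stating carefully is that what is exchanged are the transmitted messages ($\le B\,w(e)$ bits on an edge $e$), not the possibly-huge node states, so the weight bound $c$ on the separator translates directly into the $Bc$-bits-per-round bound.
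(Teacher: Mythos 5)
Your proof is correct and follows essentially the same route as the paper's: fix the randomness via the public coin, then run an induction in which Alice's and Bob's known prefixes/suffixes shrink by $k$ indices per simulated round, with the per-round communication charged to the $\leq c$ total edge weight crossing the sliding separator and the bound $T \leq (n-2\eta)/(2k)$ used both to guarantee that the far endpoints are still in the other party's known set and that $L_{n/2}$ and $R_{n/2}$ survive after $T$ rounds. The only difference from the paper is cosmetic bookkeeping (explicit indices $a_t,b_t$ instead of inline bounds).
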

\begin{proof}
  We show that Alice and Bob can simulate an execution of the given
  distributed protocol to obtain outputs that are consistent with the
  specification of the given distributed task. First note that a
  randomized distributed algorithm can be modeled as a deterministic
  algorithm where at the beginning, each node $v$ receives s
  sufficiently large random string $r(v)$ as additional input. Assume
  that $R$ is the concatenation of all the random strings
  $r(v)$. Then, a randomized distributed protocol with error
  probability $\eps$ can be seen as a deterministic protocol that
  computes outputs that satisfy the specification of the given task
  with probability at least $1-\eps$ over all possible choices of
  $R$. (A similar argument has also been used, e.g., in
  \cite{dassarma12}).

  Alice and Bob have access to a public coin giving them a common
  random string of arbitrary length. As also the set of nodes
  $V=\set{0,\dots,n-1}$ of $G$ is known, Alice and Bob can use the
  common random string to model $R$ and to therefore consistently
  simulate all the randomness used by all $n$ nodes in the distributed
  protocol. Given $R$, it remains for Alice and Bob to simulate a
  deterministic protocol. If they can (deterministically) compute the
  outputs of some nodes of a given deterministic protocol, they can
  also compute outputs for a randomized protocol with error
  probability $\eps$ such that the outputs are consistent with the
  specification of the distributed task with probability at least
  $1-\eps$.

  Given a deterministic distributed protocol on a graph $G\in
  \tilde{\mathcal{G}}$ with time complexity $T\leq (n-2\eta)/(2k)$, we
  now describe a two-party protocol with communication complexity at
  most $2BcT$ in which for each round $r\in \set{0,\dots,T}$,
  \begin{itemize}
  \item[(I)] Alice computes the states of all nodes $i< n-\eta-r\cdot k$
    at the end of round $r$, and
  \item[(II)] Bob computes the states of all nodes $i\geq\eta+r\cdot
    k$ at the end of round $r$.
  \end{itemize}
  Because the output $y(u)$ of every node $u$ is determined by $u$'s
  state after $T$ rounds, together with the upper bound on $T$, (I)
  implies that Alice can compute the outputs of all nodes $i< n/2$
  and Bob can compute the outputs of all nodes $i\geq n/2$. Therefore,
  assuming that initially, Alice knows the states of node $i<
  n-\eta$ and Bob knows the states of nodes $i\geq\eta$, a two-party
  protocol satisfying (I) and (II) $\eta$-solves the distributed task
  solved by the given distributed protocol. In order to prove the
  claim of the theorem, it is thus sufficient to show that there
  exists a deterministic two-party protocol with communication
  complexity at most $2Bc T$ satisfying (I) and (II).

  In a deterministic algorithm, the state of a node $u$ at the end of
  a round $r$ (and thus at the beginning of round $r+1$) is completely
  determined by the state of $u$ at the beginning of round $r$ and by
  the messages node $u$ receives in round $r$ from its neighbors. We
  prove I) and II) by induction on $r$. First note that (interpreting
  the initial state as the state after round $0$), (I) and (II) are
  satisfied by the assumption that initially, Alice knows the initial
  states of all nodes $0,\dots,n-1-\eta$ and Bob knows the initial
  states of all nodes $\eta,\dots,n-1$. Next, assume that (I) and (II)
  hold for some $r=r'\in\set{0,\dots,T-1}$. Based on this, we show how
  to construct a protocol with communication complexity at most $2B c$
  such that (I) and (II) hold for $r=r'+1$. We formally show how,
  based on assuming I) and (II) for $r=r'$, Alice can compute the
  states of nodes $i< n-\eta-(r'+1)k$ using only $Bc$ bits of
  communication. The argument for Bob can be done in a completely
  symmetric way so that we get a total communication complexity of
  $2Bc$. In order to compute the state of a node $i< n-\eta-(r'+1)k$
  at the end of round $r'+1$, Alice needs to know the state of node
  $i$ at the beginning of round $r'+1$ (i.e., at the end of round
  $r'$) and the message sent by each neighbor $j$ in round
  $r'+1$. Alice knows the state of $i$ at the beginning of round $r'$
  and the messages of neighbors $j< n-\eta-r'k$ by the assumption
  (induction hypothesis) that Alice already knows the states of all
  nodes $i< n-\eta-r'k$ at the end of round $r'$. By the definition of
  the graph family $\mathcal{G}(n,k,c)$, the total weight of edges
  between nodes $i< n-\eta-(r'+1)k$ and nodes $j\geq n-\eta-r'k$ is at
  most $c$. The number of bits sent over these edges in round $r'+1$
  is therefore at most $cB$. If at the beginning of round $r'$, Bob
  knows the states of all nodes $j\geq n-\eta-r'k$, Bob can send these
  $cB$ bits to Alice. By assuming that also (II) holds for $r=r'$, Bob
  knows the states of all nodes $j'\geq\eta+r'k$. We therefore need to
  show that $\eta+r'k \leq n-\eta-r'k$ and thus $r'\leq
  (n-2\eta)/(2k)$. Because $r'\leq T-1$, this directly follows from
  the upper bound on $T$ given in the theorem statement and thus, (I)
  and (II) hold for all $r\in \set{0,\dots,T}$.
\end{proof}

\subsection{Lower Bound for Approximating Minimum Cut: Weighted Graphs}
\label{sec:mincutlower}

In this subsection, we prove a lower bound on approximating the minimum cut
in weighted graphs (or equivalently in unweighted
multigraphs). The case of simple unweighted graphs is addressed in the next subsection. 

Let $k\geq 1$ be an integer
parameter. We first define a fixed $n$-node graph $H=(V,E_H)$ that we
will use as the basis for our lower bound. The node set $V$ of $H$ is
$V=\set{0,\dots,n-1}$. For simplicity, we assume that $n$ is an
integer multiple of $k$ and that $\ell:=n/k$. The edge $E_H$ consists
of three parts $E_{H,1}$, $E_{H,2}$, and $E_{H,3}$ such that
$E_H=E_{H,1}\cup E_{H,2}\cup E_{H,3}$. The three sets are defined as
follows.
\begin{eqnarray*}
  E_{H,1} & := & \set{\set{i,j}:i,j\in\{0,\dots,n-1\}\text{ and } j=i+k},\\
  E_{H,2} & := & \set{\set{i,j}:i,j\in\{0,\dots,n-1\}\text{ and }
    \exists s\in\mathbb{N}\text{ s.t.}\ 
    i\equiv 0\!\!\!\pmod{k2^s} \text{ and } j=i+k2^s},\\
  E_{H,3} & := & \set{\set{i,j}:i,j\in\{0,\dots,n-1\},\
    i\equiv 0\!\!\!\pmod{k},\text{ and } 0<j-i\leq k-1}.
\end{eqnarray*}
The edges $E_{H,1}$ connect the nodes $V$ of $H$ to $k$ disjoint paths
of length $\ell$, where for each integer $x\in\set{0,\dots,k-1}$, the
nodes $i\equiv x\!\pmod{k}$ form on of the paths. Using the edges of
$E_{H,2}$, the nodes of the first of these paths are connected to a
graph of small diameter. Finally, using the edges $E_{H,3}$ the paths
are connected to each other in the following way. We can think of the
$n$ nodes as consisting of groups of size $k$, where corresponding
nodes of each of the $k$ paths form a group (for each integer $h\geq
0$, nodes $hk,\dots,(h+1)k-1$ form a group). Using the edges of
$E_{H,3}$ each such group is connected to a star, where the node of
the first path is the center of the star.

Based on graph $H$, we define a family $\mathcal{H}(n,k)$ of weighted
graphs as follows. The family $\mathcal{H}(n,k)$ contains all weighted
versions of graph $H$, where the weights of all edges of $E_{H,2}$ are
$1$ and weights of all remaining edges are at least $1$, but otherwise
arbitrary. The following lemma shows that $\mathcal{H}(n,k)$ is a
subfamily of $\mathcal{G}(n,k,c)$ for appropriate $c$ and that graphs
in $\mathcal{H}(n,k)$ have small diameter.

\begin{lemma}\label{lemma:lowerboundgraph}
  We have $\mathcal{H}(n,k)\subset \mathcal{G}(n,k,c)$ for $c=
  \log_2(n/k)$. Further, each graph in $\mathcal{H}(n,k)$ has
  diameter at most $O(\log(n/k))$.
\end{lemma}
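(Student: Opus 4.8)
The plan is to prove the two assertions separately. For the inclusion $\mathcal{H}(n,k)\subseteq \mathcal{G}(n,k,c)$ with $c=\log_2(n/k)$, fix an arbitrary $G\in\mathcal{H}(n,k)$ and an arbitrary threshold $h\in\{0,\dots,n-1\}$, and bound the total weight of edges of $H$ joining a node in $\{0,\dots,h\}$ to a node in $\{h+k+1,\dots,n-1\}$. An edge $\{a,b\}$ with $a<b$ can only cross this cut if $b-a\ge k+1$. First I would observe that every edge of $E_{H,1}$ has endpoints differing by exactly $k$, and every edge of $E_{H,3}$ has endpoints differing by at most $k-1$; hence no edge of $E_{H,1}\cup E_{H,3}$ crosses the cut, regardless of the (arbitrary) weights $G$ assigns to them. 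So only edges of $E_{H,2}$ contribute, and by definition of $\mathcal{H}(n,k)$ every such edge has weight exactly $1$, so it remains to count the crossing $E_{H,2}$ edges.

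For that count I would argue scale by scale. Edges of $E_{H,2}$ with $b-a=k2^0=k$ never cross (same reason as $E_{H,1}$), so fix $s\ge 1$. The edges of $E_{H,2}$ of ``length'' $k2^s$ are exactly the $\{mk2^s,(m+1)k2^s\}$ for $m=0,1,2,\dots$, forced by the divisibility condition $i\equiv 0\pmod{k2^s}$; the half-open blocks $[mk2^s,(m+1)k2^s)$ are pairwise disjoint, and such an edge can cross the cut at $h$ only if $h$ lies in its block. Therefore at most one edge of length $k2^s$ crosses. Moreover an $E_{H,2}$ edge has both endpoints in $\{0,\dots,n-1\}$, which forces $k2^s\le n-1<k\cdot(n/k)$ and hence $s<\log_2(n/k)$, so at most $\log_2(n/k)$ values of $s$ are admissible. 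Summing the weight-$1$ contributions, the total crossing weight is at most $\log_2(n/k)=c$, which is exactly what is needed.

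For the diameter bound, I would first note that all graphs in $\mathcal{H}(n,k)$ share the same edge set $E_H$, so the (unweighted) diameter is identical for all of them and it suffices to bound $\mathrm{diam}(H)$; I will show every node reaches node $0$ within $O(\log(n/k))$ hops. From an arbitrary node $v$, a single $E_{H,3}$ edge leads to the center $hk$ of $v$'s group, where $h=\lfloor v/k\rfloor$. Identifying the first-path nodes $\{0,k,\dots,(\ell-1)k\}$ with $\{0,1,\dots,\ell-1\}$ by division by $k$, the edges of $E_{H,2}$ become $\{i,i+2^s\}$ whenever $2^s\mid i$. Starting from $h>0$, let $2^s$ be the lowest set bit of $h$; then $2^s\mid h$, so $\{h-2^s,h\}$ is an edge of $E_{H,2}$, and traversing it replaces $h$ by $h-2^s$, i.e.\ it clears the lowest set bit of $h$ and leaves all higher bits untouched. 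Hence after at most $\lceil\log_2(n/k)\rceil$ such steps we reach $0$, and any two nodes are joined by a walk of length at most $2+2\lceil\log_2(n/k)\rceil=O(\log(n/k))$: one $E_{H,3}$ hop to a group center, down to $0$, back up to the other group center, and one final $E_{H,3}$ hop.

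I expect the only genuine subtlety to be the crossing-edge count establishing $c=\log_2(n/k)$: one has to be careful that the blocks in the scale-$s$ argument are exactly the disjoint intervals pinned down by $i\equiv 0\pmod{k2^s}$, and that the admissible range of $s$ is bounded correctly so the final constant is $\log_2(n/k)$ and not something marginally larger. The diameter part, by contrast, is a clean binary-counter / bit-clearing argument once the reduction to the first path via one $E_{H,3}$ edge is in place.
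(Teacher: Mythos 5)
Your proof is correct and follows the same route as the paper's: restrict the cut-crossing count to $E_{H,2}$ by observing $E_{H,1}\cup E_{H,3}$ edges span at most $k$ indices, then show at most one $E_{H,2}$ edge crosses per scale $s$, and for the diameter use an $E_{H,3}$ hop to reach the first path followed by the dyadic $E_{H,2}$ edges. Your version just makes explicit the two points the paper compresses into single sentences, namely the disjoint-block argument for "one crossing per scale" and the bit-clearing walk showing the first path has diameter $O(\log(n/k))$ (for the latter, note that if $\mathbb{N}$ is taken to start at $1$, the $2^0$-bit step should be attributed to an $E_{H,1}$ edge rather than $E_{H,2}$, but those edges are present in any case).
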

\begin{proof}
  To show that $\mathcal{H}(n,k)\subset\mathcal{G}(n,k,c)$, we need to
  show that for each $h\in\set{0,\dots,n-1}$, the total weight of
  edges between nodes in $\set{0,\dots,h}$ and nodes in
  $\set{h+k+1,\dots,n-1}$ is at most $c$. All edges in $E_{H,1}$ and
  $E_{H,3}$ are between nodes $i$ and $j$ for which $|j-i|\leq k$, the
  only contribution to the weight of edges between nodes in
  $\set{0,\dots,h}$ and nodes in $\set{h+k+1,\dots,n-1}$ thus comes
  from edges $E_{H,2}$. For each $h\in\set{0,\dots,n-1}$ and for each
  $s\in \mathbb{N}$, there is at most one pair $(i,j)$ such that
  $i\equiv j\equiv \!\pmod{k2^s}$ and such that $i\leq h$ and
  $j>h+k$. The number of edges between nodes in $\set{0,\dots,h}$ and nodes in
  $\set{h+k+1,\dots,n-1}$ therefor is at most $\log_2(n/k)$ and the
  first claim of the lemma therefore follows because edges in
  $E_{H,2}$ are required to have weight $1$. The bound on the diameter
  follows directly from the construction: With edges $E_{H,3}$, each
  node is directly connected to a node of the first path and with
  edges $E_{H,2}$, the nodes of the first path are connected to a
  graph of diameter $O(\log(n/k))$.
\end{proof}

Based on the graph family $\mathcal{H}(n,k)$ as defined above, we can
now use the basic approach of \cite{dassarma12} to prove a lower bound
for the distributed minimum cut problem.

\begin{theorem}\label{thm:cutlowerweighted}
  In weighted graphs (and unweighted multi-graphs), for any
  $\alpha\geq1$, computing an $\alpha$-approximation of the edge
  connectivity $\lambda$ or computing an $\alpha$-approximate minimum
  cut (even if $\lambda$ is known) requires at least $\Omega\big(D +
  \sqrt{n/(B\log n)}\big)$ rounds, even in graphs of diameter
  $D=O(\log n)$.
\end{theorem}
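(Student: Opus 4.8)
The plan is to combine the Generalized Simulation Theorem (\Cref{thm:simulation}) with a reduction from two‑party set disjointness, run on the family $\mathcal{H}(n,k)$ of \Cref{lemma:lowerboundgraph}. Since $\mathcal{H}(n,k)\subset\mathcal{G}(n,k,c)$ with $c=\log_2(n/k)$ and graphs in $\mathcal{H}(n,k)$ have diameter $O(\log(n/k))$, any $T$‑round protocol for a distributed task on $\mathcal{H}(n,k)$ with $T\le(n-2\eta)/(2k)$ yields a public‑coin two‑party protocol of communication at most $2B\log_2(n/k)\cdot T$ that $\eta$‑solves it. I will take $\eta=k$, so that (with $\ell:=n/k$) the set $L_\eta$ is exactly the first ``group'' $\{0,\dots,k-1\}$ and $R_\eta$ exactly the last group; the point is that then Alice knows all of an instance except the edges internal to the last group, and Bob all but those internal to the first group.

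Now the reduction. Given Alice's set $X\subseteq\{1,\dots,k-1\}$ and Bob's set $Y\subseteq\{1,\dots,k-1\}$, I build a weighted graph $G_{X,Y}\in\mathcal{H}(n,k)$ by weighting the free edges (all of $E_{H,1}$ and $E_{H,3}$; the edges $E_{H,2}$ must keep weight $1$). For $j\in\{1,\dots,k-1\}$ call the path on the nodes $\equiv j\pmod k$ ``leg $j$''; it is attached to the spine (path $0$) by one $E_{H,3}$ star edge per group. Set every $E_{H,1}$ edge to weight $W:=\alpha\ell+1$; set the star edge of leg $j$ in each interior group $1\le h\le\ell-2$ to weight $1$; set the star edge of leg $j$ in group $0$ to weight $1$ if $j\in X$ and to $W$ if $j\notin X$; and symmetrically set the star edge of leg $j$ in group $\ell-1$ to weight $1$ if $j\in Y$ and to $W$ if $j\notin Y$. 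All weights are $\ge 1$ and $E_{H,2}$ is untouched, so $G_{X,Y}\in\mathcal{H}(n,k)$; moreover the $X$‑dependent edges are internal to the first group ($\subseteq L_\eta$) and the $Y$‑dependent ones internal to the last group ($\subseteq R_\eta$), so Alice and Bob can each supply the weights required of them. The cut that isolates leg $j$ from the rest of $G_{X,Y}$ severs exactly the $\ell$ star edges of leg $j$, hence has weight $\ell$ if $j\in X\cap Y$ and weight $\ge W+\ell-1>\alpha\ell$ otherwise; and since the connected components of $(V,E_{H,1})$ are precisely the $k$ paths, every cut that is not an ``isolate one leg'' cut must sever at least one $E_{H,1}$ edge and so has weight $\ge W>\alpha\ell$. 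Thus the edge connectivity of $G_{X,Y}$ equals $\ell$ if $X\cap Y\ne\emptyset$ and exceeds $\alpha\ell$ if $X\cap Y=\emptyset$, so any $\alpha$‑approximation $\tilde\lambda\in[\lambda,\alpha\lambda]$ of it decides disjointness.

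For the parameters: two‑party set disjointness on $k-1$ bits requires $\Omega(k)$ bits even with public coins and constant error, so a $T$‑round distributed $\alpha$‑approximate edge‑connectivity algorithm with error $\le 1/3$ and $T\le(n-2k)/(2k)$ forces $2B\log_2(n/k)\cdot T=\Omega(k)$, i.e.\ $T=\Omega(k/(B\log(n/k)))$. Taking $k=\Theta(\sqrt{nB\log n})$ makes $n/k=\Theta(\sqrt{n/(B\log n)})$, so $\log_2(n/k)=\Theta(\log n)$, the graph has diameter $O(\log(n/k))=O(\log n)$, and either $T>(n-2k)/(2k)=\Theta(\sqrt{n/(B\log n)})$ or $T=\Omega(k/(B\log n))=\Omega(\sqrt{n/(B\log n)})$; in both cases $T=\Omega(\sqrt{n/(B\log n)})$. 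Since the diameter is $O(\log n)=O(\sqrt{n/(B\log n)})$, this is the claimed $\Omega(D+\sqrt{n/(B\log n)})$ bound even for diameter‑$O(\log n)$ graphs; the $\Omega(D)$ term for larger target diameters is a routine add‑on (hang a path of length $\Theta(D)$ off $H$ with a single pendant edge at its far end whose weight the minimum cut depends on). The variant in which one must output an $\alpha$‑approximate cut while $\lambda$ is known follows from a closely related construction in which $\lambda$ is held fixed while the identity of the (essentially unique) near‑minimum cut encodes the disjointness answer, so that even a single node's membership bit forces $\Omega(k)$ bits of communication.

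I expect the crux to be the ``every other cut has weight $>\alpha\lambda$'' step in the disjoint case: one must rule out cuts that slice partially through one or several legs as well as cuts interacting with the $O(\log(n/k))$ weight‑$1$ shortcut edges of $E_{H,2}$, and it is precisely this that dictates how large $W$ must be chosen relative to $\ell$ and $c$ (and, in the unweighted‑multigraph phrasing, forces the weights to remain small integers so the multiplicities stay polynomial). A secondary, purely bookkeeping concern is that the encodings of $X$ and $Y$ must sit strictly inside $L_\eta$ and $R_\eta$, which is why $\eta$ has to be a small multiple of $k$; this costs only a constant factor in the $T\le(n-2\eta)/(2k)$ budget and does not affect the final balance.
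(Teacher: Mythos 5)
Your proposal is essentially the paper's own proof: the same reduction from two-party set disjointness via the Generalized Simulation Theorem with $\eta=k$, the same graph family $\mathcal{H}(n,k)$, the same encoding (heavy weight $W=\alpha\ell+1$ on all $E_{H,1}$ edges, with $X$ and $Y$ written into the group-$0$ and group-$(\ell-1)$ star edges of $E_{H,3}$ so that leg $z\in X\cap Y$ is the unique cheap cut), and the same choice $k=\Theta(\sqrt{nB\log n})$ balanced against the budget $T\le(n-2k)/(2k)$. Your handling of the two variants (approximating $\lambda$ vs.\ outputting a cut when $\lambda$ is known) also matches the paper's use of the two versions of the disjointness promise, $|X\cap Y|\le 1$ and $|X\cap Y|=1$.

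One small imprecision worth flagging: the sentence ``every cut that is not an isolate-one-leg cut must sever at least one $E_{H,1}$ edge'' is not literally true --- a cut whose one side is a union of two or more whole legs severs no $E_{H,1}$ edge. The correct argument is the one you gesture at in your final paragraph: a cut has weight $<W$ only if neither side is split across a connected component of the heavy subgraph, and in the disjoint case that subgraph is connected (so no such cut exists), while in the $|X\cap Y|=\{z\}$ case its only two components are leg $z$ and everything else, giving exactly the isolate-leg-$z$ cut. This fix is routine and you already anticipated that the step needs care, so the argument is sound.
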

\begin{proof}
  We prove the theorem by reducing from the two-party set disjointness
  problem \cite{chattapodhyay11,kalyanasundaram92,razborov92}. Assume
  that as input, Alice gets a set $X$ and Bob get a set $Y$ such that
  the elements of $X$ and $Y$ are from a universe of size $O(p)$. It
  is known that in general, Alice and Bob need to exchange at least
  $\Omega(p)$ bits in order to determine whether $X$ and $Y$ are
  disjoint \cite{kalyanasundaram92,razborov92}. This lower bound holds
  even for public coin randomized protocols with constant error
  probability and it also holds if Alice and Bob are given the promise
  that if $X$ and $Y$ intersect, they intersect in exactly one element
  \cite{razborov92}. As a consequence, if Alice and Bob receive sets
  $X$ and $Y$ as inputs with the promise that $|X\cap Y|=1$, finding
  the element in $X\cap Y$ also requires them to exchange $\Omega(p)$
  bits.

  Assume that there is a protocol to find an $\alpha$-minimum cut or
  to $\alpha$-approximate the size of a minimum cut in time $T$ with a
  constant error probability $\eps$. In both cases, if $T$ is sufficiently small, we show that Alice
  and Bob can use this protocol to efficiently solve set disjointness
  by simulating the distributed protocol on a special network from the
  family $\mathcal{H}(n,k)$.

  We now describe the construction of this network
  $G\in\mathcal{H}(n,k)$. We assume that the set disjointness inputs
  $X$ and $Y$ of Alice and Bob are both of size $\Theta(k)$ and from a
  universe of size $k-1$. The structure of $G$ is already given, the
  edge weights of edges in $E_{H,1}$ and $E_{H,3}$ are given as
  follows. First, all edges $E_{H,1}$ (the edges of the paths) have
  weight $\alpha\ell+1$ (recall that $\ell=n/k$ is the length of the
  paths). We number the paths from $0$ to $k-1$ as follows. Path
  $p\in\set{0,\dots,k-1}$ consists of all nodes $i$ for which $i\equiv
  p\!\pmod{k}$. Note that the first node of path $p$ is node $p$ and
  the last node of path $p$ is $n-k+p$. We encode the set disjointness
  inputs $X$ and $Y$ in the edge weights of the edges of $E_{H,3}$ as
  follows. For each $x\in\set{0,\dots,k-1}\setminus X$, the edge
  between node $0$ and node $x$ has weight $\alpha\ell+1$. Further,
  for each $y\in\set{0,\dots,k-1}\setminus Y$, the edge between $n-k$
  and $n-k+y$ has weight $\alpha\ell+1$. All other edges of $E_{H,3}$
  have weight $1$.

  Hence, the graph induced by the edges with large weight $\alpha\ell
  + 1$ (in the following called heavy edges) looks as follows. It
  consists of the $k$ paths of length $\ell$. In addition for each
  $x\not\in X$, path $x$ is connected to node $0$ and for each
  $y\not\in Y$, path $y$ is connected to node $n-k$. Assume that there
  is exactly one element $z\in X\cap Y$. Path $z$ is not connected to
  path $0$ through a heavy edge, all other paths are connected to
  each other by heavy edges. The minimum cut $(S,V\setminus S)$ is
  defined by the nodes $S=\set{i\in\set{0,\dots,n-1}:i\equiv
    z\!\pmod{k}}$. As each node on path $z$ is connected by a single
  weight $1$ edge to a node on path $0$, the size of the cut
  $(S,V\setminus S)$ is $\ell$. There is at least one heavy edge
  crossing every other cut and thus, every other cut has size at least
  $\alpha\ell+1$. In order to find an $\alpha$-approximate minimum
  cut, a distributed algorithm therefore has to find path $z$ and thus
  the element $z\in X\cap Y$.

  Assume now that there is a distributed protocol that computes an
  $\alpha$-approximate minimum cut in $T$ rounds, by using messages of
  at most $B$ bits. The described graph $G$ is in $\mathcal{H}(n,k)$
  and by Lemma \ref{lemma:lowerboundgraph}, the graph therefore also
  is in $\mathcal{G}(n,k,\log(n/k))$ and it has diameter at most
  $O(\log n)$. We can therefore prove the claim of the theorem by
  providing an appropriate lower bound on $T$. We reduce the problem
  to the two-party set disjointness problem by describing how Alice
  and Bob can together simulate the given distributed protocol.

  Initially, only the nodes $0,\dots,k-1$ depend on the input $X$ of
  Alice and only the nodes $n-k,\dots,n-1$ depend on the input $Y$ of
  Bob. The inputs of all other nodes are known. Initially, Alice
  therefore knows the inputs of all nodes in $\set{0,\dots,n-k-1}$ and
  Bob knows the inputs of all nodes in $\set{k,\dots,n-1}$. Thus, by Theorem
  \ref{thm:simulation}, for $T\leq (n-2k)/(2k)$, there exists a
  $2BcT=O(TB\log n)$-bit public coin two-party protocol
  between Alice and Bob that $k$-solves the problem of finding an
  $\alpha$-approximate minimum cut in $G$. However, since at the end
  of such a protocol, Alice and Bob know the unique minimum cut
  $(S,V\setminus S)$, they can also use it to find the element $z\in
  X\cap Y$. We have seen that this requires them to exchange at least
  $\Omega(k)$ bits and we thus get a lower bound of $T=\Omega(k/(B\log
  n))$ on $T$. Recall that we also need to guarantee that $T\leq
  (n-2k)/(2k)$. We choose $k=\Theta(\sqrt{nB\log n})$ to obtain the
  lower bound claimed by the theorem statement. Note that the lower
  bound even applies if the size $\ell$ of the minimum cut is known.

  If the size of the minimum cut is now known and the task of an
  algorithm is to approximate the size of the minimum cut, we can
  apply exactly the same reduction. This time, we do not use the
  promise that $|X\cap Y|=1$, but only that $|X\cap Y|\leq 1$. The
  size of the minimum cut is $\ell$ if $X$ and $Y$ intersect and it is
  at least $\alpha\ell+1$ if they are disjoint. Approximating the
  minimum cut size therefore is exactly equivalent to solving set
  disjointness in this case.
\end{proof}

\subsection{Lower Bound for Approximating Minimum Cut: Simple Unweighted Graphs}
We next present our lower bound for approximating the minimum cut
problem in unweighted simple graphs.

\begin{theorem}\label{thm:cutlowersimple}
  In unweighted simple graphs, for any $\alpha\geq1$ and $\lambda\geq
  1$, computing an $\alpha$-approximation of $\lambda$ or finding an
  $\alpha$-approximate minimum cut (even if $\lambda$ is known)
  requires at least $\Omega\big(D + \sqrt{n/(B\alpha\lambda\log
    n)}\big)$ rounds, even in networks of diameter
  $D=O(\log n+ \frac{1}{\lambda}\cdot\sqrt{n/(B\alpha\lambda\log n)})$.
\end{theorem}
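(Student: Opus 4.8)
The plan is to adapt the reduction from set disjointness used in the proof of Theorem~\ref{thm:cutlowerweighted}, but now working with unweighted simple graphs, where a ``heavy'' edge of weight $\alpha\ell+1$ must be replaced by a bundle of $\alpha\ell+1$ parallel paths, and parallel paths must in turn be replaced by vertex-disjoint paths in order to keep the graph simple. First I would redesign the base graph: instead of $k$ single paths of length $\ell$, use $k$ groups of parallel strands, where each strand is itself a simple path, so that a ``path'' in the weighted construction becomes a subgraph of roughly $\alpha\lambda$ disjoint length-$\ell$ paths; the spine connecting the first group to a small-diameter graph (the role of $E_{H,2}$) is kept essentially as before, contributing only $O(\log(n/k))$ to the cut parameter $c$. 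The $E_{H,3}$-type connections at the two ends encode $X$ at node-block $0$ and $Y$ at node-block $n-k$: for $x\notin X$ we wire strand-group $x$ solidly into group $0$ (enough parallel connections to exceed $\alpha\lambda$), and for $x\in X$ we leave only $\lambda$ connecting edges. Exactly one $z\in X\cap Y$ then yields a cut of size $\lambda$ isolating strand-group $z$, while every other cut has size $>\alpha\lambda$; hence an $\alpha$-approximate min-cut algorithm must identify $z$.

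Next I would verify the parameters. The number of nodes is $n$, and since each of the $k$ groups now carries $\Theta(\alpha\lambda)$ strands of length $\ell$, we have $n = \Theta(k\alpha\lambda\ell)$, so $\ell = \Theta(n/(k\alpha\lambda))$. I would check that this graph lies in $\mathcal{G}(n,k',c)$ for $k' = \Theta(k\alpha\lambda)$ (the block width in the linear vertex ordering) and $c = O(\log(n/k) + \lambda)$ or so — only the $E_{H,2}$ spine and the $O(\lambda)$ ``light'' connecting edges at a block boundary cross a given cut in the ordering. The diameter is $O(\log(n/k))$ from the spine plus $O(\ell)$ to walk along a strand to reach the spine, giving $D = O(\log n + \ell) = O\big(\log n + \frac{1}{\lambda}\sqrt{n/(B\alpha\lambda\log n)}\big)$ once $k$ is chosen appropriately. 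Then I would invoke the Generalized Simulation Theorem~\ref{thm:simulation}: a $T$-round protocol with $T \le (n - 2\eta)/(2k')$ yields a two-party protocol with communication $O(TBc) = O(TB(\log n + \lambda))$ that $\eta$-solves the task for $\eta = \Theta(k')$, and since recovering the cut recovers $z \in X\cap Y$, this is $\Omega(k)$ bits. Combining, $T = \Omega\big(k/(B(\log n+\lambda))\big)$, and balancing this against the upper bound constraint $T \le \Theta(n/(k\alpha\lambda))$ gives $k = \Theta\big(\sqrt{n/(\alpha\lambda)} \cdot 1/\sqrt{B\log n}\big)$ (absorbing the $\lambda$ vs.\ $\log n$ comparison appropriately), whence $T = \Omega\big(\sqrt{n/(B\alpha\lambda\log n)}\big)$. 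The $\Omega(D)$ term is standard.

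The main obstacle I anticipate is keeping the graph \emph{simple} while still forcing the min-cut to have size exactly $\lambda$ and every other cut to exceed $\alpha\lambda$: replacing weighted heavy edges by parallel edges is immediate only for multigraphs, so the heavy connections and the strand bundles must be realized by genuinely vertex-disjoint paths, which inflates the vertex count by a factor $\Theta(\alpha\lambda)$ and is precisely the source of the $\alpha\lambda$ in the denominator of the bound. A secondary delicate point is ensuring that this inflation does not blow up the parameter $c$ beyond $O(\log n)$ (so that the $B\log n$, rather than $B\cdot\mathrm{poly}$, appears) — this requires that at every boundary of the linear ordering only $O(\log(n/k))$ spine edges plus $O(\lambda)$ encoding edges cross, which in turn dictates how the strands within a group must be laid out consecutively in the vertex numbering. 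I would also need to double-check that the promise version of set disjointness ($|X\cap Y|\le 1$) still suffices for the edge-connectivity-approximation variant, exactly as in the proof of Theorem~\ref{thm:cutlowerweighted}.
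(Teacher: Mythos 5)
Your high-level strategy matches the paper: inflate the weighted construction from Theorem~\ref{thm:cutlowerweighted} by a factor $\Theta(\alpha\lambda)$ so that ``heavy'' connections can be realized in a simple unweighted graph, then reduce from set disjointness through the Generalized Simulation Theorem and balance $k$ against $\ell$. But the specific construction you propose has a genuine gap. Replacing a weighted path by $\Theta(\alpha\lambda)$ \emph{vertex-disjoint strands}, with the strands cross-wired only at their endpoints, produces a graph whose edge connectivity is $O(1)$, not $\lambda$: an interior vertex of any single strand has degree $2$ (or $3$ if it carries a vertical connection), so isolating it is a cut of constant size, and separating any interior segment of a strand costs only two edges. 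The reduction would then be forced to find this trivial cut rather than the cut isolating path $z$, and the whole argument collapses. The paper avoids this by replacing each \emph{node} of $H$ by a clique $K_{\alpha\lambda+1}$ and each path edge $e\in E_{H,1}$ (as well as each heavy encoding edge of $E_{H,3}$) by a complete bipartite graph $K_{\alpha\lambda+1,\alpha\lambda+1}$. The cliques are essential: any cut that splits a column now costs at least $\alpha\lambda$ edges inside the clique, any cut across a path step costs $(\alpha\lambda+1)^2$ edges, so all cuts other than the one isolating path $z$ have size at least $\alpha\lambda$.

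There are two secondary issues that also cost you the claimed bounds. First, your parameter $c$. The vertical (light) edges fit within a window of width $k'=\Theta(k\alpha\lambda)$ in the linear vertex ordering, so they never cross the long cuts that define $c$; the only long edges are those coming from $E_{H,2}$, giving $c=O(\log n)$ exactly as in the weighted case. Using $c=O(\log n+\lambda)$ and ``absorbing'' the $\lambda$ term is not sound: carrying it through the balancing gives only $T=\Omega\bigl(\sqrt{n/(B\alpha\lambda(\log n+\lambda))}\bigr)$, which is weaker than the stated $\Omega\bigl(\sqrt{n/(B\alpha\lambda\log n)}\bigr)$ once $\lambda\gg\log n$. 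Second, your diameter analysis gives $O(\log n+\ell)$, not $O(\log n+\ell/\lambda)$, because you only connect the ends of the strands to the spine. The paper introduces the parameter $D'=\ell/\lambda$ and inserts vertical connections at $\Theta(\lambda)$ roughly evenly spaced columns: this keeps the total number of edges leaving path $z$ at most $\lambda$, while ensuring every vertex is within $O(D')$ hops of a vertically connected column and hence within $O(D'+\log n)$ of every other vertex. Without this sparsified-but-spread-out placement of vertical connections, your construction does not meet the theorem's diameter guarantee.
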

\begin{proof}[Proof Sketch]
  The basic proof argument is the same as the proof of
  Theorem \ref{thm:cutlowerweighted}. We therefore only describe the
  differences between the proofs. Because in a simple unweighted
  graph, we cannot add edges with different weights and we cannot add
  multiple edges, we have to adapt the construction. Assume that
  $\alpha\geq 1$ and $\lambda\geq 1$ are given. First note that for
  $\lambda=O(1)$, the statement of the theorem is trivial as
  $\Omega(D)$ clearly is a lower bound for approximating the edge
  connectivity or finding an approximate minimum cut. We can therefore
  assume that $\lambda$ is sufficiently large.

  We adapt the construction of the network $G$ to get a simple graph
  $G'$ as follows. First, every node of $G$ is replaced by a clique of
  size $\alpha\lambda+1$. All the ``path'' edges $e\in E_{H,1}$ are
  replaced by complete bipartite graphs between the cliques
  corresponding to the two nodes connected by $e$ in $G$. Let us again
  assume that $k$ is the number of paths and that each of these paths
  is of length $\ell$. Instead of $\ell k$ nodes, the new graph $G'$
  therefore has $\ell k (\alpha\lambda+1)$ nodes. For each edge $e\in
  E_{H,2}$---the edges that are used to reduce the diameter of the
  graph induced by the first path---, we add a single edge between two
  nodes of the corresponding cliques. Adding only one edge suffices to
  reduce the diameter of the graph induced by the cliques of the first
  path. For the edges in $E_{H,3}$, the adaptation is slightly
  larger. The edges among the first $k$ nodes and the last $k$ nodes
  of $G$ that are used to encode the set disjointness instance $(X,Y)$
  into the graph are adapted as follows. Each edge of weight
  $\alpha\ell+1$ is replaced by a complete bipartite subgraph, whereas
  each edge of weight $1$ is replaced by a single edge connecting the
  corresponding cliques in $G'$. For the remaining edges, we introduce
  a parameter $D'=\ell/\lambda$. Instead of vertically connecting each
  of the $\ell$ cliques of all paths to stars (with the center in a
  node on path $0$), we only add some of these vertical
  connections. We already connected the first and the last clique of
  each path. In addition, we add such vertical connections (a single
  edge between the clique on path $0$ and each of the corresponding
  cliques on the other paths) such that: a) between two vertically
  connected ``columns'' there is a distance of at most $2D'$ and b) in
  total, the number of vertically connected ``columns'' is at most
  $\lambda$ (including the first and the last column). Note that
  because the length of the paths is $\ell$, the choice of $D'$
  allows to do so. We now get a graph $G'$ with the following
  properties.
  \begin{itemize}
  \item For each $x\in X\cap Y$, all the vertical connections are
    single edges connecting path $x$ with path $0$. The total number
    of edges connecting the cliques of path $x$ with other nodes is at
    most $\lambda$.
  \item For each $x\not\in X\cap Y$, path $x$ is connected to path $0$
    through a complete bipartite graph
    $K_{\alpha\lambda+1,\alpha\lambda+1}$.
  \item The diameter of $G'$ is $O(D' + \log n)$.
  \end{itemize}
  Let us consider the subgraph $G''$ of $G'$ induced by only the edges
  of all the complete bipartite subgraphs
  $K_{\alpha\lambda+1,\alpha\lambda+1}$ of our construction. If $X\cap
  Y=\emptyset$, $G''$ is connected. Therefore, in this case, the edge
  connectivity of $G''$ (and thus also of $G'$) is at least
  $\alpha\lambda$. If $|X\cap Y|=1$ and if we assume that $z$ is the
  element in $X\cap Y$, $G''$ consists of two components. The first
  component is formed by all the nodes (of the cliques) of path $z$,
  whereas the second component consists of all the remaining
  nodes. By the above observation, the number of edges in $G'$ between
  the two components of $G''$ is at most $\lambda$ and therefore the
  edge connectivity of $G'$ is at most $\lambda$. Also, every other
  edge cut of $G'$ has size at least $\alpha\lambda$. Using the same
  reduction as in Theorem \ref{thm:cutlowerweighted}, we therefore
  obtain the following results
  \begin{itemize}
  \item If the edge connectivity of the network graph is not known,
    approximating it by a factor $\alpha$ requires $\Omega\big(\min\set{k/(B\log n),\ell}\big)$ rounds.
  \item If the edge connectivity $\lambda$ is known, finding a cut of
    size less than $\alpha\lambda$ requires $\Omega\big(\min\set{k/(B\log n),\ell}\big)$ rounds.
  \end{itemize}
  The lower bound then follows by setting $k/(B\log n) =
  \ell$. Together with $n=k\ell(\alpha\lambda+1)$, we get
  $\ell=\Theta(\sqrt{n/(B\alpha\lambda\log n)})$.
\end{proof}


\noindent\textbf{Acknowledgment:}
We thank David Karger for valuable discussions in the early stages of
this work and thank Michael Kapralov for discussing cosmetic
similarities with \cite{RefinementSampling}. We are also grateful to
Boaz Patt-Shamir for pointing out some issues with the lower bounds
and the anonymous reviewers of DISC 2013 for helpful comments on an
earlier version of the manuscript.

\bibliographystyle{abbrv}
\bibliography{references}
\appendix

\section{Missing Parts of \Cref{sec:sampling}}
\label{app:sampling}
\begin{proof}[Proof of \Cref{crl:approx}]
  We run $\Theta(\log^2 n)$ edge-sampling
  experiments: $\Theta(\log n)$ experiments for each sampling
  probability $p_j = 2^{-j}$ where $j \in [1, \Theta(\log
  n)]$. From \Cref{thm:UpperTh}, we know that, if $p_j \geq
  \Omega\big(\frac{\log n}{\C}\big)$, the sampled graph is connected
  with high probability. On the other hand, by focusing on just one
  minimum cut, we see that if $p_j \leq \frac{1}{\C}$, then the probability
  that the sampled graph is connected is at most $3/4$. Let $p^*$ be
  the smallest sampling probability $p_j$ such that at least $9/10$ of
  the sampling experiments with probability $p_j$ lead to sampled
  graph being connected. With high probability, $\tilde{\C} :=
  \frac{1}{p*}$ is an $O(\log n)$-approximation of the
  edge-connectivity. To check whether each sampled graph is connected,
  we use Thurimella's connectivity-tester (refer to
  \Cref{subsec:Thurimella}), and doing that for $\Theta(\log^2 n)$
  different sampled graphs requires $O(D + \sqrt{n} \log^2 n \log^{*}
  n)$ rounds.
\end{proof}

\begin{algorithm}[H]
\caption{An $O(\log n)$ Approximation Algorithm for the Edge-Connectivity}
\begin{algorithmic}[1]
\algorithmsize
	\Statex
	\For{$i=1$ to $\log n$} 
		\For{$j=1$ to $4\log n$}
			\State Choose subset $E^j_i\subseteq E$ by adding each edge $e\in E$ to $E^j_i$ independently with probability $2^{-i}$
		\EndFor
	\EndFor
	\Statex
	\State Run Thurimella's connectivity-tester on graph $G$ with $\Theta(\log^2 n)$ subgraphs $H^j_i=(V, E^j_i)$, in $O(D + \sqrt{n} \log^2 n \log^{*} n)$ rounds.
	\Comment{Refer to \Cref{subsec:Thurimella} for Thurimella's connectivity-tester algorithm.}
	\Statex
	\For{$i=1$ to $\Theta(\log n)$} 
		\For{$j=1$ to $c\log n$}
			\If{graph $G=(V,E_i^j)$ is connected}
				\State $X_i^j\gets 1$
			\Else
				\State $X_i^j\gets 0$
			\EndIf
			\State$ X_i \gets \sum_{j=1}^{c\log n} X_i^j$
		\EndFor
	\EndFor
	\State $i^{*}\gets \arg\max_{i \in [1, \Theta(\log n)]} (X_i \geq 9c\log n/10)$	
	\State $\tilde{\C}\gets 2^{i^{*}}$
	\Statex	
	\State Return $\tilde{\C}$
\label{alg:SizeApprox}
\end{algorithmic}
\end{algorithm}

\section{Missing Parts of \Cref{sec:layeringCut} }\label{App:LayeringCut}
\begin{proof}[Proof of \Cref{lem:CutTester}]
  If a cut $(\mathcal{C}, V\setminus \mathcal{C})$ has size at most
  $\kappa /(1+\delta)$, then the probability that $(\mathcal{C},
  V\setminus \mathcal{C}) \cap E_j \neq \emptyset$ is at most
  $1-(1-p')^{\frac{\kappa}{1+\delta}} = 1-2^{-\, \frac{1}{1+\delta}}
  \leq 0.5 -\frac{\delta}{4}$. On the other hand, if cut
  $(\mathcal{C}, V\setminus \mathcal{C})$ has size at least
  $((1+\delta)\kappa)$, then the probability that $(\mathcal{C},
  V\setminus \mathcal{C}) \cap E_j \neq \emptyset$ is at least
  $1-(1-p')^{(1+\delta)\kappa} \geq 1-2^{-{1+\delta}} \geq
  0.5+\frac{\delta}{4}$.
  This $\Theta(\delta)$ difference between these probabilities gives
  us our basic tool for distinguishing the two cases. Since we repeat
  the experiment presented in \Cref{subsec:CutTester} for
  $\Theta(\frac{\log n}{\delta^2})$ times, an application of
  Hoeffding's inequality shows that if cut $(\mathcal{C}, V\setminus
  \mathcal{C})$ has size at most $\kappa /(1+\delta)$, the test passes
  w.h.p., and if cut $(\mathcal{C}, V\setminus \mathcal{C})$ has size
  at least $\kappa (1+\delta)$, then, w.h.p., the test does not pass.
\end{proof}
\begin{algorithm}[t]
\caption{Distributed cut tester vs.\ threshold $\kappa$ @ node $v$ }
\begin{algorithmic}[1]
\algorithmsize
	\Statex Given a subgraph $G'=(V, E')$ where $E'\subseteq E$, and a threshold $\kappa$
	\Statex
	\State $v.componentID \gets$ the smallest id in the component of $G'$ that contains $v$
	\Comment{Using Thurimella's Component Identification Alg.}
	\Statex
	\For{$j=1$ {\bf to} $c\log(n)/\delta^2$} 
		\State Choose subset $E_i\subseteq E\setminus E'$ by adding each edge $e\in E\setminus E'$ to $E_j$ independently with probability $1-2^{-\frac{1}{\kappa}}$
	\EndFor
	\Statex
	\State $\ell^{max}_j(v) \gets$ the largest $componentID$ in the connected component of $H_i=(V, E' \cup E_i)$ that contains $v$
	\State $\ell^{min}_j(v) \gets$ the smallest $componentID$ in the connected component of $H_i=(V, E' \cup E_i)$ that contains $v$
	\Statex \Comment{Using Thurimella's Component Identification on the $\Theta(\log n)$ values of $i$, simultaneously. (cf.\ \Cref{subsec:Thurimella})}
	\Statex
	\State $X_{i} \gets 0$
	\For{$i=1$ {\bf to} $\alpha\log n$} 
			\If{$\ell^{max}_j(v) \neq v.componentID$ {\bf or} $\ell^{min}_j(v) \neq v.componentID$} $X_i\gets X_i +1$
			\EndIf
	\EndFor
	\Statex
	\State Test passes @ node $v$ iff $X_i \leq \frac{c\log n}{2\delta^2}$
	\label{alg:CutTester}
	\end{algorithmic}
\end{algorithm}

\section{Missing Parts of \Cref{sec:Matula}}\label{App:Matula}
\begin{proof}[Proof of \Cref{lem:ThurimellaCertificate}]
The idea of Thurimella's original sparse certificate-algorithm\cite{Thurimella97} is relatively simple: $E^*$ is made of the edges of $k$ MSTs that are
found in $k$ iterations. Initially, we set $E^*=\emptyset$. In each
iteration, we assign weight $0$ to the edges in $E \setminus E^*$ and
weight $1$ to the edges in $E^*$. In each iteration, we find a new MST
with respect to the new weights using the MST algorithm of
\cite{KuttenPeleg95}, and add the edges of this MST to $E^*$. Because of the weights, each MST tries to avoid using the edges
that are already in $E^*$. In particular, if in one iteration, there
exist two edges $e,e'$, a cut $(\mathcal{C}, V\setminus \mathcal{C})$
such that $e, e'\in (\mathcal{C}, V\setminus \mathcal{C})$ and $e\in
E^*$ but $e' \notin E^*$, then the new MST will not contain $e$ but will contain an edge $e'' \in (E \setminus E^{*}) \cap (\mathcal{C}, V\setminus \mathcal{C})$. This is because, MST will prefer $e''$ to $e$ and there is at least one such $e''$, namely edge $e'$. As a result, if there is a cut with
size at most $k$, in each MST, at least one edge of the cut gets added
to $E^*$, until all edges of the cut are in $E^*$.

To solve our generalized version of sparse certificate, we modify the algorithm in the following
way. As before, we construct the set $E^*$ iteratively such that at
the beginning $E^*=\emptyset$. In each iteration, we give weight $0$
to edges of $E_c$, weight $1$ to edges of $E \setminus (E_c\cup E^*)$
and weight $2$ to edges in $E^*$. Moreover, in each iteration, if the
newly found MST is $T$, we only add edges in $T \setminus E_c$ to the
set $E^*$. Note that if for an edge $e=\set{u,v}\in E$, nodes $u$ and
$v$ correspond to the same node of the edge-contracted graph $G'$, then
edge $e$ will never be added to $E^*$ as either it is in $E_c$ or $u$
and $v$ are connected via a path made of edges in $E_c$ and thus, in
each MST, that path is always preferred to $e$. Moreover, if there is
a cut $(\mathcal{C}, V\setminus \mathcal{C})$ of $G$ such that
$(\mathcal{C}, V\setminus \mathcal{C}) \cap E_c = \emptyset$ and there
are two edges $e, e'\in (\mathcal{C}, V\setminus \mathcal{C})$ such
that $e\in E^*$ but $e' \notin E^*$, then the new MST will not contain
$e$ but will contain an edge $e'' \in (E \setminus E^{*}) \cap (\mathcal{C}, V\setminus \mathcal{C})$.
\end{proof}

\section{Information Dissemination Lower Bound}
\label{sec:disseminationlower}

Here, we use Theorem \ref{thm:simulation} to show a lower bound on a
basic information dissemination task in $\lambda$-edge connected
networks. We show that even if such networks have a small diameter, in
general, for $s$ sufficiently large, disseminating $s$ bits requires
time at least $\Omega(n/\lambda)$. As a corollary of our lower bound,
we also obtain a lower bound on the diameter of the graph induced
after independently sampling each edge of a $\lambda$-edge connected
graph with some probability.

 The following theorem follows from Theorem
\ref{thm:simulation} by taking a $\lambda$-connected network $H\in
\mathcal{H}(n,k)$ (choosing $k$ as small as possible) and by adding
some edges to create a network of small diameter.

\begin{theorem}\label{thm:dissemination}
  For any $\lambda\geq 1$, there exist weighted simple $n$-node graphs
  (or equivalently unweighted multigraphs) $G=(V,E)$ with edge
  connectivity at least $\lambda$ and diameter $D=O(\log n)$ such that
  for two distinguished nodes $s,t\in V$, sending $K$ bits of
  information from $s$ to $t$ requires time at least
  $\Omega\left(\min\left\{\frac{K}{B\log n},n\right\}\right)$. For
  unweighted simple graphs, the same problem requires
  $\Omega\left(\min\left\{\frac{K}{B\log
        n},\frac{n}{\lambda}\right\}\right)$ rounds.
\end{theorem}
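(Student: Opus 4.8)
The plan is to deduce the theorem from the Generalized Simulation Theorem (\Cref{thm:simulation}) via a reduction from the obvious $\Omega(K)$ communication lower bound for transmitting a $K$-bit string. I would formalize ``sending $K$ bits from $s$ to $t$'' as the distributed task in which node $s$ gets a string $x\in\{0,1\}^K$ as input, all other nodes get the empty input, and every node must output the empty string except $t$, which must output $x$. Put $s=0$, $t=n-1$, and $\eta=1$, so that $s\in V\setminus R_\eta$ while $t\in R_{n/2}$, and note that Bob's initial knowledge (the inputs and initial states of nodes $1,\dots,n-1$) is independent of $x$, since only node $0$'s input depends on $x$. If some randomized protocol with constant error $\eps<1/2$ solves this task in $T$ rounds on a graph $G\in\mathcal{G}(n,k,c)$ with $T\le (n-2)/(2k)$, then \Cref{thm:simulation} produces a public-coin two-party protocol of communication at most $2BcT$ in which Alice holds $x$ and Bob outputs $x$ with probability at least $1-\eps$. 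This is impossible unless $2BcT\ge K-O(1)$: for each fixing of the public string, the transcript of a $C$-bit protocol takes at most $2^C$ values and Bob's output is a function of that transcript, so the expected number (over the public coin) of inputs $x$ for which Bob is correct is at most $2^C$, while correctness forces it to be at least $2^K(1-\eps)$. Hence $T=\Omega(K/(Bc))$, and combining with $T\le (n-2)/(2k)$ shows that \emph{any} protocol needs $T=\Omega\big(\min\{K/(Bc),\,n/k\}\big)$ rounds. It then remains to exhibit, in each of the two graph classes, a $\lambda$-edge-connected graph of diameter $O(\log n)$ lying in $\mathcal{G}(n,k,c)$ with $c=O(\log n)$ and $k$ as small as possible.

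For weighted graphs (equivalently unweighted multigraphs) I would let $G$ be the $n$-node path $0,1,\dots,n-1$ with every path edge of weight $\lambda$ (i.e.\ of multiplicity $\lambda$), augmented by the weight-$1$ ``skip'' edges $\{i,i+2^r\}$ for all $r\ge 1$ and all $i\equiv 0\pmod{2^r}$ — these are exactly the edges $E_{H,2}$ that \Cref{lemma:lowerboundgraph} uses to reduce the diameter of a single path. Since the path is a connected spanning subgraph, every cut crosses a weight-$\lambda$ edge, so the edge connectivity is at least $\lambda$; the skip edges bring the diameter down to $O(\log n)$; and, exactly as in the proof of \Cref{lemma:lowerboundgraph}, at most one skip edge per scale $r$ crosses any window, so $G\in\mathcal{G}(n,1,\log_2 n)$. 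With $k=1$ and $c=O(\log n)$, the bound from the first paragraph reads $T=\Omega\big(\min\{K/(B\log n),\,n\}\big)$.

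For unweighted simple graphs, large weights and parallel edges are unavailable, so I would apply the clique blow-up of \Cref{thm:cutlowersimple} to the same path-plus-skips graph: replace each path node by a clique of size $\lambda+1$, replace each path edge by a complete bipartite graph $K_{\lambda+1,\lambda+1}$ between the corresponding cliques, and replace each skip edge by a single edge joining the corresponding cliques. The resulting graph $G'$ on $n=n'(\lambda+1)$ vertices (with $n'$ the number of nodes of the underlying path) has edge connectivity at least $\lambda$: a cut that splits some clique crosses at least $\lambda$ intra-clique edges, and a cut respecting every clique must cut a whole $K_{\lambda+1,\lambda+1}$, because the cliques are arranged along a path and every cut of that path crosses one of its edges. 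Its diameter is $O(\log n)$, since a skip edge lets one step between two cliques in $O(1)$ hops and the underlying path-plus-skips super-graph has diameter $O(\log n')$, and, ordering the vertices so that each clique occupies a contiguous block, every intra-clique and bipartite-graph edge stays within $O(\lambda)$ index positions while at most one skip edge per scale crosses any window; hence $G'\in\mathcal{G}(n,\Theta(\lambda),O(\log n))$. Plugging $k=\Theta(\lambda)$ and $c=O(\log n)$ into the general bound yields $T=\Omega\big(\min\{K/(B\log n),\,n/\lambda\}\big)$.

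I expect the main obstacle to be the simple-graph construction, where requirements (i) edge connectivity $\ge\lambda$, (ii) diameter $O(\log n)$, and (iii) membership in $\mathcal{G}(n,\Theta(\lambda),O(\log n))$ must hold simultaneously: in \Cref{thm:cutlowersimple} the long-range ``vertical'' connections were deliberately \emph{sparsified} to keep the connectivity of one path equal to $\lambda$, whereas here one wants enough of them to keep the diameter logarithmic, and one must re-verify that re-adding them does not inflate the window parameter $c$ — it does not, because each such connection stays within a single ``column''. The two auxiliary ingredients, the invocation of \Cref{thm:simulation} and the $\Omega(K)$ transmission lower bound, are routine, though the latter should be stated carefully for public-coin protocols with a worst-case (per-input) correctness guarantee, as sketched in the first paragraph.
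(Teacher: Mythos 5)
Your proof is correct and follows the paper's overall strategy: formalize the task, invoke \Cref{thm:simulation} to get a two-party protocol, and beat it against the $\Omega(K)$ communication bound for transmitting a $K$-bit string. The weighted case is identical to the paper's: a weight-$\lambda$ path with the weight-$1$ dyadic skip edges, giving membership in $\mathcal{G}(n,1,O(\log n))$. Your explicit public-coin argument (for each fixed coin the transcript takes at most $2^C$ values, so by linearity only $2^C$ of the $2^K$ inputs can be correctly reconstructed on average, forcing $C\geq K-O(1)$) is more careful than the paper's one-line assertion and is a welcome tightening.

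The one place you genuinely diverge is the simple-graph construction. The paper takes $H'$ to be the $\lambda$-th power of a path (connect $i$ and $j$ iff $|i-j|\leq\lambda$) plus the same skip edges, which lands directly in $\mathcal{G}(n,\lambda,O(\log n))$ with edge connectivity exactly $\lambda$. You instead reuse the clique blow-up from \Cref{thm:cutlowersimple}: cliques of size $\lambda+1$, complete bipartite graphs along the path, and single edges for skips, giving $\mathcal{G}(n,\Theta(\lambda),O(\log n))$. Both constructions satisfy the three requirements you list, and both yield the same asymptotic bound $\Omega\bigl(\min\{K/(B\log n),\,n/\lambda\}\bigr)$ since $k=\Theta(\lambda)$ in either case. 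The paper's path-power graph is slightly more economical (no blow-up in vertex count, edge connectivity is exactly $\lambda$ rather than merely $\geq\lambda$), while yours has the advantage of reusing machinery already built for \Cref{thm:cutlowersimple}. Your worry in the last paragraph about re-introducing vertical connections is moot here since there are no vertical connections in a one-path construction; the only long-range edges are the $O(\log n)$ skip edges, and, as you correctly observe, at most one per scale crosses any window.

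One small nit: when applying \Cref{thm:simulation} you should make explicit that $\eta=1$ works because node $0$ is the only node whose initial state depends on $x$; this is what guarantees Bob's initial knowledge is $x$-independent, which you do state but is worth flagging since it is what makes the reduction a genuine one-way communication problem.
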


\begin{proof}
  We consider the weighted simple graph $H=(V,E_H,w_H)$ with node set
  $V=\set{0,\dots,n-1}$, edge set $E_H=\set{i,j\in V: |i-j|= 1}$, and
  $w_H(e)=\lambda$ for all $e\in E_H$, as well as the unweighted
  simple graph $H'=(V,E_H')$ with node set $V=\set{1,\dots,n}$ and
  edge set $E_H'=\set{i,j\in V: |i-j|\leq \lambda}$. Graphs $H$ and
  $H'$ both have edge connectivity $\lambda$, but they still both have
  large diameter. In order to get the diameter to $O(log n)$, we
  proceed as follows in both cases. For every two nodes
  $i,j\in\set{0,\dots,n-1}$ for which there exists an integer
  $\ell\geq 1$ such that $i\equiv 0\!\pmod{2^\ell}$ and $j=i+2^\ell$,
  if $i$ and $j$ are not connected by an edge in $H$ (or in $H'$), we
  add an edge of weight $1$ between $i$ and $j$. The resulting graphs
  have diameter $O(\log n)$ and we have $H\in\mathcal{G}(n,1,O(\log
  n))$ and $H'\in\mathcal{G}(n,\lambda,O(\log n))$.

  In both cases, we choose the two distinguished nodes $s$ and $t$ as
  $s=0$ and $t=n-1$. Sending $K$ bits from $s$ to $t$ can then be
  modelled as the following distributed task in $G(H)$. Initially node
  $0$ gets an arbitrary $K$-bit string as input. The input of every
  other node of $G(H)$ is the empty string. To solve the task, the
  output of node $n-1$ has to be equal to the input of node $0$ and
  all other nodes need to output the empty string.

  Assume that there is a (potentially randomized) distributed protocol
  that solves the given information dissemination task in $T$ rounds
  with error probability $\eps$.  For $T\leq (n-2)/2$ (in the case of
  $H$) and $T\leq (n-2)/(2\lambda)$ (in the case of $H'$), Theorem
  \ref{thm:simulation} therefore implies that there exists a
  public-coin two-party protocol that $1$-solves the given task with
  communication complexity at most $O(BT\log n)$. In such a protocol,
  only Alice gets the input of node $0$ and Bob has to compute the
  output of node $n-1$. Consequently Alice needs to send $K$ bits to
  Bob and we therefore need to have $BT\log n = \Omega(K)$. Together
  with the upper bounds on $T$ required to apply Theorem
  \ref{thm:simulation}, the claim of the theorem then follows.
\end{proof}

From Theorem \ref{thm:dissemination}, we also get an
upper bound on the diameter of when partitioning a graph into
edge-disjoint subgraphs.

\begin{corollary}\label{cor:partitionlower}
  There are unweighted $\lambda$-edge connected simple graphs $G$ and
  unweighted $\lambda$-edge connected multigraphs $G'$ such that when
  partitioning the edges of $G$ or $G'$ into $\ell\geq\gamma\log n$
  spanning subgraphs, for a sufficiently large constant $\gamma$, at
  least one of the subgraphs has diameter $\Omega(n/\lambda)$ in the
  case of $G$, and $\Omega(n)$ in the case of $G'$.
\end{corollary}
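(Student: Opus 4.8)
The plan is to deduce this from Theorem~\ref{thm:dissemination}: if the edges of one of the graphs constructed there could be split into few edge-disjoint spanning subgraphs that all have small diameter, then the $\ell$ short $s$--$t$ paths living inside these subgraphs could be used simultaneously to push many bits from $s$ to $t$ faster than the dissemination lower bound allows. Concretely, I would take $G$ (in the simple case) and $G'$ (in the multigraph case) to be exactly the graphs of Theorem~\ref{thm:dissemination}, together with their distinguished nodes $s,t$; recall these have edge connectivity at least $\lambda$ and diameter $O(\log n)$. Fix an arbitrary partition of the edge set into $\ell\ge\gamma\log n$ spanning subgraphs $G_1,\dots,G_\ell$. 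If some $G_i$ is disconnected its diameter is infinite and we are done; otherwise every $G_i$ is connected, so each contains an $s$--$t$ path $P_i$ of length at most $d:=\max_i\mathrm{diam}(G_i)$, and since the $G_i$ are edge-disjoint the paths $P_1,\dots,P_\ell$ are edge-disjoint as well.

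Next I would turn these $\ell$ edge-disjoint short $s$--$t$ paths into a fast dissemination scheme. Given a $K$-bit string at $s$, split it into $\ell$ chunks of about $K/\ell$ bits and push the $i$-th chunk along $P_i$ by pipelining: $s$ injects $B$ bits per round into $P_i$ and every internal node forwards what it received in the previous round. Since the $P_i$ use pairwise disjoint edges, the $\ell$ pipelines run simultaneously without any congestion, so after $O(K/(\ell B)) + d$ rounds the entire string has arrived at $t$. The important feature is that the diameter enters \emph{additively} ($+d$) rather than multiplicatively, and that the $\ell\ge\gamma\log n$ parallel pipelines exactly cancel the $\log n$ factor appearing in the dissemination lower bound --- a single path, or a naive non-pipelined routing, would only be able to rule out constant diameter.

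Finally I would choose $K$ so that the lower bound of Theorem~\ref{thm:dissemination} is of the right order. For the simple graph $G$, set $K=(B\log n)\cdot(n/\lambda)$, so that $K/(B\log n)=n/\lambda$ and the theorem says disseminating $K$ bits needs at least $c_0\,(n/\lambda)$ rounds, where $c_0>0$ is the constant hidden in its $\Omega(\cdot)$. Comparing with the $O(K/(\ell B))+d = O(n/(\gamma\lambda))+d$ rounds of the scheme above (using $\ell\ge\gamma\log n$) yields $d\ge(c_0-c_1/\gamma)(n/\lambda)$ for an absolute constant $c_1$, which is $\Omega(n/\lambda)$ once $\gamma$ is a large enough constant; for the multigraph $G'$ the same argument with $K=(B\log n)\cdot n$ and the bound $\Omega(\min\{K/(B\log n),n\})$ gives $d=\Omega(n)$. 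I expect the only real care needed is the bookkeeping of constants --- $\gamma$ must be picked large relative to the hidden constants of Theorem~\ref{thm:dissemination} and of the pipelining bound --- together with the two minor remarks that a disconnected $G_i$ is handled for free and that we are entitled to reuse the graphs of Theorem~\ref{thm:dissemination} since the corollary only asserts the existence of suitable $G$ and $G'$.
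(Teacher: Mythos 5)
Your proposal is correct and follows essentially the same route as the paper: reduce to the dissemination lower bound of Theorem~\ref{thm:dissemination} by observing that an edge-partition into $\ell$ low-diameter spanning subgraphs yields $\ell$ edge-disjoint $s$--$t$ routes and hence a pipelined protocol sending $K$ bits in $O(d+K/(\ell B))$ rounds, then choose $K$ so the two bounds collide. As a side note, your bookkeeping correctly shows $\gamma$ must be \emph{large}; the paper's proof text says ``$\gamma$ sufficiently small,'' which is a typo that your version resolves.
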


\begin{proof}
  Assume that we are given a graph $G=(V,E)$ and a partition of the
  edges $E$ into $\ell$ spanning subgraphs such that each subgraph has
  diameter at most $D$. Consider two nodes $s$ and $t$ of $G$. We can
  use the partition of $E$ to design a protocol for sending $K$ bits
  from $s$ to $t$ in $O(D+K/\ell)$ rounds as follows. The $K$ bits are
  divided into equal parts of size $K/\ell$ bits. Each part is in
  parallel sent on one of the $\ell$ parts. Using pipelining this can
  be done in time $T=O(D+K/\ell)$. From Theorem
  \ref{thm:dissemination}, we know that for simple graphs,
  $T=\Omega\left(\min\left\{\frac{K}{B\log
        n},\frac{n}{\lambda}\right\}\right)$. Choosing
  $K=Bn\log(n)/\lambda$ and $\gamma$ sufficiently small then implies
  the claimed lower bound on $D$. The argument for multigraphs is done
  in the same way by using the stronger respective lower bound in
  Theorem \ref{thm:dissemination}
\end{proof}

\paragraph{Remark:} When partitioning the edges of a graph $G$ in a
random way such that each edge is independently assigned to a
uniformly chosen subgraph, each subgraph corresponds to the induced
graph that is obtained if each edge of $G$ is sampled with probability
$p=1/\ell$. As a consequence, the diameter lower bounds of Corollary
\ref{cor:partitionlower} also holds with at least constant probability
when considering the graph obtained when sampling each edge of a
$\lambda$-connected graph with probability $p=1/\ell\leq 1/(\gamma\log
n)$.


\end{document}